\definecolor{ptblue}{RGB}{15,76,129} %% PANTONE 19-4052 Classic Blue in Year 2020.
\definecolor{ptemerald}{HTML}{009473} %% PANTONE 17-5641 TPX Emerald in Year 2013.
\definecolor{ptilluminating}{HTML}{F5DF4D} %% PANTONE 13-0647 Illuminating in Year 2021; a.k.a., a kind of yellow.
\definecolor{ptgray}{HTML}{939597} %% PANTONE 17-5104 Ultimate Gray in Year 2021.
\DeclareMathOperator*{\argmax}{arg\,max}
\DeclareMathOperator*{\argmin}{arg\,min}
\let\Cref\crtCref
\let\cref\crtcref
\theoremstyle{plain}
\newtheorem{theorem}{Theorem}[section]
\newtheorem{corollary}[theorem]{Corollary}
\newtheorem{lemma}[theorem]{Lemma}
\theoremstyle{definition}
\newtheorem{definition}[theorem]{Definition}
\declaretheorem[style=definition,qed=$\bigtriangleup$,sibling=theorem]{example}
\newtheorem*{theorem*}{Theorem}
\theoremstyle{remark}
\newtheorem*{remark}{\upshape\bfseries Remark}
\let\oldfaCheck\faCheck
\renewcommand{\faCheck}{\oldfaCheck\xspace}
\let\oldfaPaw\faPaw
\renewcommand{\faPaw}{\oldfaPaw\xspace}
\newcommand{\EF}[1]{\if\relax\detokenize\expandafter{\@firstofone#1{}}\relax EF\xspace\else EF#1\fi}
\newcommand{\PreserveBackslash}[1]{\let\temp=\\#1\let\\=\temp}
\newcolumntype{C}[1]{>{\PreserveBackslash\centering}p{#1}}
\begin{document}
\title{Fair Allocation of Two Types of Chores}

\author[1]{Haris Aziz\thanks{haris.aziz@unsw.edu.au}}
\author[1]{Jeremy Lindsay\thanks{j.lindsay@student.unsw.edu.au}}
\author[1]{Angus Ritossa\thanks{a.ritossa@student.unsw.edu.au}}
\author[1]{Mashbat Suzuki\thanks{mashbat.suzuki@unsw.edu.au}}
\affil[1]{UNSW Sydney}
\date{\vspace*{-1cm}}

\maketitle
\begin{abstract}
We consider the problem of fair allocation of indivisible chores under additive valuations. 
We assume that the chores are divided into two types and under this scenario,  we present several results. 
Our first result is a new characterization of Pareto optimal allocations in our setting and a polynomial-time algorithm to compute an envy-free up to one item (EF1) and Pareto optimal allocation. 
We then turn to the question of whether we can achieve a stronger fairness concept called envy-free up any item (EFX). We present a polynomial-time algorithm that returns an EFX allocation. Finally, we show that for our setting, it can be checked in polynomial-time whether an envy-free allocation exists or not.
\end{abstract}

\section{Introduction}

How to make allocation decisions fairly is  a fundamental question that has been examined in many fields including computer science, economics, operations research and mathematics. We consider this question in the context of allocating indivisible chores among agents where each agent has additive valuations over the chores. 

There are several formal criteria of fairness~(see e.g., \citep{BoChLa16,Moul19a}). Among the criteria, envy-freeness is referred to as the `gold-standard'~\citep{CKM+19a}. It requires that no agent prefers another agent's bundle to their own bundle. Although envy-freeness is a highly-desirable fairness concept, it poses several challenges. An envy-free allocation may not exist, and furthermore, it is NP-complete to check whether an envy-free allocation exists under additive valuations \citep{AGMW14a,BSV21a}. For this reason, a major focus on fair allocation is to find relaxations of envy-freeness. A particularly attractive relaxation of envy-freeness  is called \emph{envy-freeness up to any item (EFX)}~\citep{CKM+19a,ACIW22a}. 
However, the existence of EFX is a major open problem for goods and for chores. EFX requires that if an agent is envious of another agent, ignoring any item that lessens the envy results in the envy disappearing. A weaker concept is \emph{envy-freeness up to one item (EF1)} that requires that if an agent is envious of another agent, then there exists some item such that ignoring the item results in the envy disappearing.
It is open whether an EF1 and \textit{Pareto optimal (PO) } allocation always exists for chores. 

In view of the open problem concerning the existence of EFX as well as EF1+PO allocations and the absence of positive algorithmic results regarding envy-free allocations, we turn our attention to a natural scenario of chore allocation in which there are at most two types of chores.  We assume that the items can be divided into two groups $A$ and $B$. Chores within the same group are identical and hence a given agent has the same value for the identical items. A natural motivating example could be a group of 4 housemates allocating  monthly household chores consisting of 18 room cleaning chores and 15 cooking chores.

There are several reasons for considering the case of two chore types. Firstly, it is natural to consider restrictions on the general chore allocation under which we can achieve positive algorithmic results. For example, there are many papers that assume that agents have binary valuations for items~(see, e.g., \citep{BKV18a,DaSc15a,AzRe20a}): 0 or 1 in the case of goods and 0 and -1 in the case of chores.\footnote{Our assumption of two chore types does not assume that agents have zero as one of the two valuations. Zero valuations make many problems considerably easier.}
There are also some recent papers where agents have exactly two values in the valuation functions (bi-valued utilities)~\citep{EPS22a,GMQ22a} . In contrast, we allow the set of all agents to possibly have $2n$ different values for the set of items. Finally, two chore-types is a natural subclass of \textit{personalized bi-valued instances} ~(see, e.g., \citep{EPS22a}) in which each agent subjectively divides the items into two classes and has a corresponding value for items in each of the classes.

\subsection*{Contributions}

We give a polynomial-time algorithm for computing an EF1+fPO allocation for two chore type instances (Theorem~\ref{thm:ef1+fpo}) where fPO (fractional Pareto optimal) is a property stronger than Pareto optimality and requires Pareto optimality among all fractional outcomes. Since there are very few results known on the existence of EF1+PO allocation for chores -  as the general additive valuation setting is a major open problem  - we make concrete progress towards the problem by providing an affirmative answer in a restricted case. En route to our result, we also give a novel characterization of all fPO allocations in our setting. 

We prove that for two chore type instances an EFX allocation exists and can be computed in polynomial-time (\Cref{thm: efx}). Our algorithm differs significantly from the natural adaptation of the goods algorithm of \citet{GMV22a} and other existing approaches as they fail to produce an EFX allocation in our setting. Since the existence of EFX allocations for chores is not known even in the restricted setting of three agents with additive valuations, we remark that our work contributes towards the body of literature which explores this question in restricted settings. 

We show that there exists a polynomial-time algorithm to check whether an envy-free allocation exists in the two chore types setting (\Cref{thm: finding EF in polynomial time}). Note that this problem is NP-hard for general additive instances of indivisible chores \citep{BSV21a}. Table~\ref{table:summary} summarizes existence and complexity results under additive valuations and Figure~\ref{fig:relations} summarizes the logical relations and compatibility of the key concepts that we consider.

 \begin{figure}[h!]
     \begin{center}
        \scalebox{0.85}{
\begin{tikzpicture}

  \tikzstyle{onlytext}=[]

  \node[onlytext] (fair) at (0,0) {\begin{tabular}{c}\textbf{fairness}\end{tabular}};
  \node[onlytext] (fair) at (6,0) {\begin{tabular}{c}\textbf{efficiency}\end{tabular}};

  \draw[-, line width=1pt] (-1,-1) -- (8,-1) ;

  \node[onlytext] (EFX) at (0,-2) {\begin{tabular}{c}{EFX}\end{tabular}};

  \node[onlytext] (EF1) at (0,-5) {\begin{tabular}{c}{EF1}\end{tabular}};

\draw[->, line width=1pt] (EFX) -- (EF1) ;

  \node[onlytext] (fPO) at (6,-2) {\begin{tabular}{c}{fPO}\end{tabular}};

  \node[onlytext] (PO) at (6,-5) {\begin{tabular}{c}{PO}\end{tabular}};

\draw[->, line width=1pt] (fPO) -- (PO) ;

	    \draw [line width=20pt,opacity=0.2,green,line cap=round,rounded corners] (-0.5,-5) -- (0.5,-5)  -- (5.5,-2) -- (6.5,-2) ;

	    \draw [line width=20pt,opacity=0.2,green,line cap=round,rounded corners] (-0.2,-2) -- (0.2,-2);

 \draw[-,pink, dotted, line width=1.5pt, rounded corners, line cap=round] (-1,-2.4) -- (7,-2.4) -- (7,-1.5) -- (-1,-1.5) -- (-1,-2.4);

\end{tikzpicture}
 }
\end{center}
 \caption{\label{fig:relations} Logical relations between fairness and efficiency concepts.
 An arrow from (A) to (B) denotes that (A) implies (B). For our setting of 2 chore types, the properties in a connected solid green shape can be simultaneously satisfied, and  the combined properties in connected dotted pink are impossible to simultaneously satisfy.
 }
\end{figure}

							   \begin{table*}[h!]
								   \begin{center}
								   
   \scalebox{1}{
{ \begin{tabular}{lccc}
					\toprule

				&EF1 \& PO&EFX\\
					\midrule
Chores: general&existence open&existence open\\
Chores: personalised bi-valued&existence open&existence open\\
Chores: bi-valued&in P, exists \citep{EPS22a,GMQ22a}&existence open\\
Chores: binary&in P, exists&in P, exists\\
Chores: 2 item types&in P, exists (\Cref{thm:ef1+fpo})&in P, exists (\Cref{thm: efx})\\
								   	\bottomrule
								   \end{tabular}
								   }
}
								   \end{center}
								   \caption{Existence and complexity results under additive valuations} 
								   \label{table:summary}
								   \end{table*}

\section{Related Work}

	Given that an envy-free allocation may not exist,  \citet{Budi11a} proposed a relaxation of envy-freeness called \emph{envy-free up to one item (EF1)}. An allocation satisfies EF1 if it is envy-free or any agent's envy for another agent can be removed if some item is ignored. Under additive utilities, 
EF1 can be achieved by a simple algorithm called the round-robin sequential allocation algorithm. Agents take turn in a round-robin manner and pick their most preferred unallocated item. The interest in EF1 was especially piqued when \citet{CKM+19a} proved that for positive additive utilities, a rule based on maximizing Nash social welfare finds an allocation that is both EF1 and Pareto optimal.

For negative additive valuations, the existence of an EF1 and PO allocation is a major open problem that \citet{Moul19a} highlighted in his survey (page 436). Except for a limited number of cases such as binary utilities, bi-valued utilities (\citep{EPS22a,GMQ22a}) and lexicographic valuations \citep{HSV+22a}, the guaranteed existence of EF1 and PO allocations has not been established.

In their paper \citet{CKM+19a} also presented the concept of EFX for goods which is strictly stronger than EF1. EFX requires that if an agent $i$ is envious of another agent $j$, the envy can be removed by removing any item of $j$ that is desirable to $i$.  The concepts have been adapted for the case of chores or generalized to the case of mixed goods and chores~(see e.g., \citep{ACIW22a,AMS20a}).  \citet{Proc20a} writes that the existence of EFX allocations is the biggest problem in fair division. 

There are several papers that have explored the question concerning the existence of EFX allocations and have provided partial results. 
It is well-understood that EFX allocations exists for identical valuations. 
\citet{CGM20a} proved that an EFX allocation exists for the case 3 agents and goods. 
\citet{Maha20a} showed that when items are goods and the agents have at most 2 types of valuation functions, then there exists an EFX allocation. Some of the results on sufficient conditions for the existence of EFX allocations have been extended to more general valuations~\citep{Maha21a}.
On the other hand, \citet{HSV+22a} showed that when there are mixed goods and chores, then an EFX allocation may not exist. In this paper, we focus on EFX allocation of chores and identify conditions under which an EFX allocation exists. 
\citet{ZhWu21a} presented algorithms that provide approximation of EFX for chores.
\citet{LLW21a} considered PROPX which is a weaker property than EFX in the context of chores and they proposed algorithms for PROPX allocation of  chores. One particular paper~\citep{GMV22a} focusses on positive valuations and among other results, presents an algorithm to compute an EFX allocation when there are at most two item types. The approach does not extend to the case of chores and our corresponding result requires a different approach and argument.

\citet{GMQ22a} and \citet{EPS22a} examine problems in which agents have negative bi-valued valuations\footnote{Each agent $i$ and item $o$, the valuation is either some value $a$ or $b$.}, and they both present a polynomial-time algorithm to compute an EF1 and Pareto optimal allocation. \citet{EPS22a} also showed that for a subclass of personalised bi-valued allocations an MMS fair allocation can always be computed. Previously, \citet{ABL+19a} characterized Pareto optimal allocations for positive bi-valued valuations.

\section{Preliminaries}

Let $M$ be a set of $m$ indivisible chores, and $N$ be a set of $n$ agents. Each agent $i \in N$ has a valuation function $v_i : M \rightarrow \mathbb{R}_{\leq 0}$, where $v_i(r)$ indicates $i$'s value for chore $r\in M$. Throughout the paper we assume that the valuation functions are additive, i.e., for each agent $i\in N$ and for each set of chores $S\subseteq M$,  $v_i(S)= \sum_{r\in S }v_i(r)$. Our main focus is to study the following class of instances:

\begin{definition} A fair division instance $I=(N,M,v)$ is \textit{two chore types} if the item set can be partitioned into two sets $A$ and $B$ with $M = A\cup B$, such that for each $i \in N$ we have $v_i(r)= v_i(r')$ for all $r,r'\in A$, and $v_i(h)=v_i(h')$ for all $h,h' \in B$.
\end{definition}

In plain English, an instance is two chore types if there are at most two item types such that each agent is indifferent among items of the same type. Denote $v_i^A$ as agent $i$'s value for an item of type $A$, and $v_i^B$ as value for an item of type $B$. For notational convenience, we order the agents so that $\frac{v_i^A}{v_i^B} \leq \frac{v_{i+1}^A}{v_{i+1}^B}$ for all $1 \leq i < n$, where we consider $\frac{v_{i}^A}{0}$ to be $\infty$.\footnote{We assume that no agent values both item types at 0, as otherwise we can simply allocate all the chores to that agent.}
More formally, this condition can be restated as $v_i^A v_{i+1}^B \leq v_{i+1}^A v_i^B$. 
Informally, this means that agents who prefer type $A$ items have smaller indices, and agents who prefer type $B$ items have larger indices. We divide the agents into two sets $N_A$ and $N_B$, where agents in $N_A$ prefer type $A$ items and agents in $N_B$ prefer type $B$ items.
In particular, if $v_i^A \geq v_i^B$ then $i \in N_A$, and otherwise $i \in N_B$.
We say that an agent $i \in N_A$ \emph{strongly prefers} $A$ if $2v_i^A \geq v_i^B$, and define it similarly for agents in $N_B$.

A valuation function is called \textit{bi-valued} if there exist $a,b \in \mathbb{R}$ such that $v_i(h)\in \{a,b\}$ for all $i\in N$ and $h\in M$. There have been several works which focus on bi-valued valuations \citep{GMQ22a,EPS22a}. We remark that bi-valued valuations are incomparable to two chore types valuations. Two chore type instances allow the set of agents to have $2n$ different values across agents and items whereas bi-valued instances allow for exactly two.  
A generalization of both bi-valued and two chore type instances is called \textit{personalized bi-valued}, where for each agent $i\in N$ there exist $a_i,b_i \in \mathbb{R}$ such that $ v_i(h)\in \{a_i,b_i\}$ for all $h\in M$. For personalized bi-valued instances, the existence of EF1+PO or EFX allocations are not known. \hfill \break

\noindent\textbf{Allocation:} An \textit{allocation} is a partition $X=(X_1, . . ., X_n)$ of the item set $M$, where $X_i\subseteq M$ is the bundle allocated to agent $i\in N$. An allocation is called \emph{partial} if $\bigcup_{i \in N} X_i \neq M$. We say that the allocation is \textit{fractional} if items are allocated (possibly) fractionally such that no more than one unit of each chore is allocated. In a fractional allocation, the valuation that an agent derives from an item is directly proportional to the fraction of that item that they are allocated. 
Observe that for two chore type instances any bundle can be succinctly represented by the number of items of each type in the bundle. Thus we denote $X_i = (\alpha_i,\beta_i) $ where $\alpha_i$ is the number of type $A$ items and $\beta_i$ is the number of type $B$ items in agent $i$'s bundle. We write $(\alpha, \beta) \uplus (\alpha', \beta')$ to denote the set $(\alpha+\alpha', \beta+\beta')$ for convenience.\hfill \break
   
\noindent\textbf{Fairness Notions:} An allocation $X=(X_1, . . ., X_n)$ is \emph{envy-free (EF)} if for any agents $i,j\in N$, we have $v_i(X_i)\geq   v_i(X_j)$. It is easy to see that EF allocations may not exist in general\footnote{Consider an instance where there is one chore and two agents who have negative values for the chore}. As a result weaker fairness notions EF1 and EFX have been introduced. An allocation $X$ is \textit{envy-free up to one chore }(EF1) if for any agents $i,j\in N$, where $X_i \neq \emptyset$, there exists a chore  $ h\in X_i$ such that $v_i(X_i\setminus h)\geq   v_i(X_j)$. An allocation $X$ is \textit{envy-free up to any chore }(EFX) if for any agents $i,j\in N$, and for any chore  $h\in X_i$ with $v_i(h)<0$, we have $v_i(X_i\setminus h)\geq v_i(X_j)$.
	
Observe that EFX implies EF1, but not vice versa. We say that an agent $i$ \textit{EF1-envies} (respectively \textit{EFX-envies}) another agent $j$ if $i$ envies $j$ and this envy is not EF1 (respectively EFX). \hfill \break 

\noindent\textbf{Efficiency Notions:} An allocation $Y$ \emph{Pareto dominates} another allocation $X$ if $v_i(Y_i) \geq v_i(X_i)$ for all agents $i$ and there exists an agent $j$ such that $v_j(Y_j) > v_j(X_j)$.
An allocation is \emph{Pareto optimal} (PO) if it is not Pareto dominated by any allocation.
An allocation is \emph{fractionally Pareto optimal} (fPO) if it is not Pareto dominated by any fractional allocation.
Note that an fPO allocation is also PO, but a PO allocation is not necessarily fPO. \hfill \break

In \Cref{sec:fpo} and \Cref{sec:EFX}, we assume that all agents have strictly negative valuations for both item types. We make this assumption since if there is at least one agent who values a chore at zero then both EF1+fPO and EFX allocations can be found in a straightforward way.  To see this, observe that if there is an agent $i$ with $v_i^A = 0$ and an agent $j$ with $v_j^B = 0$, then we can give all type $A$ items to agent $i$ and all type $B$ items to agent $j$. In this case, every agent values their bundle at $0$ and so this is trivially EF1+fPO and also EFX. On the other hand, without loss of generality, if there exists an agent $i$ with $v_i^A = 0$, but $v_j^B < 0$ for all agents $j$ then we assign all type $A$ items to agent $i$ and we assign the type $B$ items in a round-robin way to all the agents. This gives an EFX allocation because each agent has at most one more type $B$ item than any other agent. Additionally, this allocation is fPO since all type $A$ items were allocated to an agent who values them at zero, and so redistributing these items cannot lead to a Pareto improvement. Furthermore, if any agent were to receive fewer type $B$ items (possibly fractionally), a different agent must receive more type $B$ items, and hence no Pareto improvements are possible.

\section{ EF1+ \texorpdfstring{\lowercase{f}PO}{} }
\label{sec:fpo}

In this section, we present a polynomial-time algorithm that computes an EF1 and fPO allocation for the fair division problem with two chore type instances. En route, we give a novel characterization of fPO allocations in our setting.  \hfill \break

\noindent\textbf{Characterization of fPO Allocations}

We begin by providing a new characterization of the structure of fPO allocations by showing \Cref{lem: fPO structure}. 

\begin{lemma}
    \label{lem: fPO structure}
    Given a two chore types instance where all agents have strictly negative valuations, an allocation 
     $X = (X_1, ..., X_n)$ is fPO if and only if there exists an agent $i$ such that:
    \begin{itemize}
        \item For all agents $j$ where $\frac{v_j^A}{v_j^B} < \frac{v_i^A}{v_i^B}$, the bundle $X_j$ only contains type $A$ items.
        \item For all agents $j$ where $\frac{v_j^A}{v_j^B} > \frac{v_i^A}{v_i^B}$, the bundle $X_j$ only contains type $B$ items.
    \end{itemize}
\end{lemma}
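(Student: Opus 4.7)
The plan is to prove both directions of the equivalence. Writing $r_j := v_j^A/v_j^B$ for agent $j$'s ratio (positive, since both valuations are negative), I will handle the backward implication by exhibiting positive Negishi-type weights $w_1,\ldots,w_n$ such that $X$ maximizes the weighted welfare $\sum_j w_j v_j(X_j)$ among all fractional allocations, which is a standard sufficient condition for fPO. For the forward implication I will argue contrapositively: any fPO allocation cannot contain a pair $(j,k)$ with $r_j<r_k$ in which $j$ owns a type-$B$ item and $k$ owns a type-$A$ item, and the threshold agent $i$ can then be read off from this structural property.

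For the backward direction, given the witnessing agent $i$ I define $w_j = v_i^A/v_j^A$ for agents with $r_j \leq r_i$ and $w_j = v_i^B/v_j^B$ for agents with $r_j \geq r_i$; the two formulas coincide when $r_j=r_i$, and every $w_j$ is strictly positive since all valuations are negative. I then verify that $\max_j w_j v_j^A = v_i^A$, attained exactly by agents with $r_j \leq r_i$, and symmetrically $\max_j w_j v_j^B = v_i^B$, attained by agents with $r_j \geq r_i$. Both verifications reduce to the cross-multiplication identity $r_j \leq r_i \Leftrightarrow v_j^A v_i^B \leq v_i^A v_j^B$ supplied by the paper's ordering convention, with strictness when ratios are unequal. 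Since the hypothesised structure places every type-$A$ item with an agent of ratio $\leq r_i$ and every type-$B$ item with an agent of ratio $\geq r_i$, the allocation $X$ is welfare-optimal under these weights, hence fPO (a Pareto-dominating fractional allocation would strictly increase $\sum_j w_j v_j(X_j)$).

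For the forward direction, suppose $X$ is fPO and that $j$ holds a type-$B$ item $b$ while $k$ holds a type-$A$ item $a$ with $r_j<r_k$. I construct a fractional Pareto improvement by transferring $\mu$ units of $a$ from $k$ to $j$ and $\lambda$ units of $b$ from $j$ to $k$; the induced utility changes $\mu v_j^A - \lambda v_j^B$ and $-\mu v_k^A + \lambda v_k^B$ are both nonnegative precisely when $r_j \leq \lambda/\mu \leq r_k$, so $r_j < r_k$ leaves a non-degenerate interval from which to pick $\lambda/\mu$ (with $\lambda,\mu \leq 1$) yielding a strict improvement—contradicting fPO. Given this obstruction, set $r^* := \max\{r_k : k \text{ owns a type-}A \text{ item}\}$ (with $r^* := -\infty$ if no agent owns a type-$A$ item) and let $i$ be any agent with $r_i = r^*$, or any minimum-ratio agent in the degenerate case. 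Then no agent with $r_j > r_i$ owns a type-$A$ item (since $r^*$ is the maximum), and no agent with $r_j < r_i$ owns a type-$B$ item (the obstruction applied to a type-$A$ owner realising $r^*$), which is exactly the bulleted structure.

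The main obstacle I expect is sign bookkeeping: because every $v_j^A, v_j^B$ is strictly negative, every cross-multiplication flips an inequality, and one must track that $1/r_j = v_j^B/v_j^A > 0$ so that the feasibility interval $[1/r_k, 1/r_j]$ for $\mu/\lambda$ is oriented correctly. A secondary subtlety is the threshold selection in boundary cases—ties in ratios, an empty set of type-$A$ or type-$B$ owners, and the possibility that $i$ itself holds items of both types—each of which is a short but distinct verification.
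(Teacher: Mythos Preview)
Your proposal is correct. The forward direction (fPO $\Rightarrow$ structure) is essentially the paper's argument: both of you observe that a ``crossing'' pair---an agent $j$ with $r_j<r_k$ holding a type-$B$ item while $k$ holds a type-$A$ item---permits a fractional swap that Pareto-improves, and the threshold agent $i$ is then read off from the absence of such pairs. Your explicit construction of $i$ via $r^*=\max\{r_k: k \text{ owns a type-}A \text{ item}\}$ is a little more direct than the paper's phrasing, but the content is the same.

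The backward direction is where you genuinely diverge. The paper argues by contradiction: it assumes a dominating fractional allocation $X'$, uses the forward direction to normalise $X'$ so that it too satisfies the structural criterion with some threshold range $[i_L',i_R']$, and then does a two-case analysis on whether the threshold ranges for $X$ and $X'$ overlap. You instead exhibit positive weights $w_j$ (equal to $v_i^A/v_j^A$ or $v_i^B/v_j^B$ depending on which side of the threshold $j$ lies) under which $X$ maximises $\sum_j w_j v_j(X_j)$ over all complete fractional allocations, and invoke the first-welfare-theorem fact that a weighted-welfare maximiser with strictly positive weights cannot be Pareto-dominated. Your route is shorter and more conceptual---it makes the connection to LP duality and Fisher-market pricing transparent, and it sidesteps the threshold-range case split entirely. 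The paper's route is more self-contained (no appeal to an external optimality principle) but pays for this with the case analysis. One small point worth making explicit when you write it up: the welfare-maximisation argument compares $X$ only against \emph{complete} fractional allocations (those allocating exactly $|A|$ and $|B|$ items of each type); this is harmless because any incomplete dominating allocation could be completed without destroying domination, but it is worth a sentence since the paper's definition of fractional allocation says ``no more than one unit of each chore is allocated.''
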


\begin{proof}
    We first prove that any allocation which does not satisfy this criteria is not fPO. 
    In particular, consider some (potentially fractional) allocation $X$ which does not satisfy the criteria of the lemma.
    Since the criteria is not met, there must exist two agents $j$ and $k$ satisfying $\frac{v_j^A}{v_j^B} < \frac{v_k^A}{v_k^B}$, where $X_j$ has a nonzero fraction of a type $B$ item and $X_k$ has a nonzero fraction of a type $A$ item.
    Let $X_j = (\alpha_j, \beta_j)$ and $X_k = (\alpha_k, \beta_k)$.
    
    Now, consider a sufficiently small $0 < \epsilon \leq \alpha_k$ such that $\epsilon \frac{v_j^A}{v_j^B} \leq \beta_j$.
    Consider the fractional allocation $X' = (X_1', \cdots, X_n')$, where $X_j' = (\alpha_j+\epsilon, \beta_j - \epsilon \frac{v_j^A}{v_j^B})$, $X_k' = (\alpha_k-\epsilon, \beta_k + \epsilon \frac{v_j^A}{v_j^B})$ and $X_l' = X_l$ for all other agents $l$.
    Note that $v_j(X_j') = v_j(X_j) + \epsilon v_j^A - \epsilon \frac{v_j^A}{v_j^B}v_j^B = v_j(X_j)$.
    Additionally, $v_k(X_k') = v_k(X_k) - \epsilon v_k^A + \epsilon \frac{v_j^A}{v_j^B} v_k^B > v_k(X_k) - \epsilon v_k^A + \epsilon \frac{v_k^A}{v_k^B} v_k^B = v_k(X_k)$.
    Hence, the allocation $X'$ is a fractional Pareto improvement over $X$, and so $X$ is not fPO.
    
    We now prove that any allocation which satisfies the criteria of \Cref{lem: fPO structure} is fPO.
    We prove by contradiction. 
    Consider some allocation $X = (X_1, ..., X_n)$ which satisfies the criteria with some agent $i$.
    Additionally, assume that $X$ is fractionally Pareto dominated by some allocation $X' = (X_1', ..., X_n')$.
    From the previous paragraph, we can assume that $X'$ also satisfies the criteria of \Cref{lem: fPO structure} with some agent $i'$: if it did not, we could apply fractional Pareto improvements until it did.
    Let $X_j = (\alpha_j, \beta_j)$ and $X_j' = (\alpha_j', \beta_j')$ for all agents $j$.
    
    Note that for all allocations which satisfy the criteria of \Cref{lem: fPO structure}, there exists a range of possible agents $i$ for which the lemma holds.
    In particular, there are two (possibly equal) agents $i_L$ and $i_R$ such that $X$ satisfies the conditions of \Cref{lem: fPO structure} for all $i \in [i_L, i_R]$, and does not satisfy the conditions for all $i \not\in [i_L, i_R]$.
    Similarly, there exists such agents $i_L'$ and $i_R'$ for $X'$.
    We consider two cases:
    
    First, assume there exists some agent $i \in [i_L, i_R] \cap [i_L', i_R']$. 
    Then, let $N_1$ be the set of agents $j$ with $\frac{v_j^A}{v_j^B} < \frac{v_i^A}{v_i^B}$, $N_2$ be the agents $j$ with $\frac{v_j^A}{v_j^B} = \frac{v_i^A}{v_i^B}$ and $N_3$ be the agents $j$ with $\frac{v_j^A}{v_j^B} > \frac{v_i^A}{v_i^B}$.
    Then, agents in $N_1$ receive only type $A$ items in both $X$ and $X'$, and agents in $N_3$ receive only type $B$ items in both $X$ and $X'$.
    Let $X_{N_1} = \biguplus_{j \in N_1}X_j$, and define $X_{N_2}$, $X_{N_3}$, $X'_{N_1}$, $X'_{N_2}$ and $X'_{N_3}$ similarly.
    Since $X'$ Pareto dominates $X$, it follows that $|X'_{N_1}| \leq |X_{N_1}|$ and  $|X'_{N_3}| \leq |X_{N_3}|$.
    However, since $X_{N_1} \uplus X_{N_2} \uplus X_{N_3} = X'_{N_1} \uplus X'_{N_2} \uplus X'_{N_3}$, we know that $X_{N_2} \subseteq X'_{N_2}$.
    These constraints can only be satisfied if $|X'_{N_1}| = |X_{N_1}|$, $|X'_{N_2}| = |X_{N_2}|$ and $|X'_{N_3}| = |X_{N_3}|$.
    Therefore $X'$ cannot Pareto dominate $X$: at best, all agents receive the same valuation in both allocations, which is a contradiction.
    
    Otherwise, assume that $[i_L, i_R] \cap [i_L', i_R'] = \emptyset$.
    Without loss of generality, assume that $i_R' < i_L$.
    Note there must exist an agent in $[i_L, i_R]$ who received a type $A$ item in $X$: otherwise, $X$ would satisfy the conditions of $\Cref{lem: fPO structure}$ for $i = i_L-1$.
    Hence, it follows that, in $X$, not all of the type $A$ items are allocated to agents in the range $[1, i_R']$.
    However, in $X'$, all the type $A$ items are allocated to agents in the range $[1, i_R']$. 
    Therefore there must exist an agent $j \in [1, i_R']$ who receives a worse bundle in $X'$ than they do in $X$, which is a contradiction.
\end{proof}

We remark that Lemma~\ref{lem: fPO structure} allows us to restrict our attention to allocations that obey the structure outlined in the lemma. In \Cref{fig: fpo}, we give a visualisation of this structure.

\begin{figure}[h]
\centering
\begin{tikzpicture}[thick]

\draw (0,0.5) node {$\frac{v_1^A}{v_1^B} \leq ... \leq \frac{v_{i-1}^A}{v_{i-1}^B} < \frac{v_{i}^A}{v_{i}^B} = ... = \frac{v_{j}^A}{v_{j}^B} < \frac{v_{j+1}^A}{v_{j+1}^B} \leq ... \leq \frac{v_{n}^A}{v_{n}^B}$};

\draw [decorate,
    decoration = {calligraphic brace,mirror}] (-4.4,0) -- node[below] {Only type $A$} (-1.8,0);
    
\draw [decorate,
    decoration = {calligraphic brace,mirror}] (-1.2,0) -- node[below] {No restrictions} (1.2,0);
    
\draw [decorate,
    decoration = {calligraphic brace,mirror}] (1.8,0) -- node[below] {Only type $B$} (4.4,0);
    
\end{tikzpicture}

\caption{The general form of allocations which satisfy \Cref{lem: fPO structure}.}
\label{fig: fpo}

\end{figure}

\subsection*{Algorithm for EF1+fPO}

To find an EF1 and fPO allocation, it is sufficient to consider only a subset of the allocations that satisfy \Cref{lem: fPO structure}. In particular, we consider a set of allocations with the following structure.
\begin{definition}    
    An allocation $X = (X_1, ..., X_n)$ is \textit{ordered with respect to agent i} (or \textit{ordered} for short) if there exists some agent $i$ where:
    \begin{itemize}
        \item For all agents $j$ where $j < i$, the bundle $X_j$ only contains type $A$ items.
        \item For all agents $j$ where $j > i$, the bundle $X_j$ only contains type $B$ items.
    \end{itemize}
\end{definition}

We remark that all ordered allocations satisfy \Cref{lem: fPO structure}, but the converse does not necessarily hold (in particular, it does not always hold when there are multiple agents with identical preferences).

First, we consider an even more restricted class of allocations, namely \textit{split-round-robin}.
\begin{definition}
    Let $i$ be an agent such that $1 \leq i < n$. The allocation \textit{split-round-robin(i)} is the allocation formed by distributing the type $A$ items to agents 1 through $i$ in a round-robin way, and distributing the type $B$ items to agents $i+1$ through $n$ in a round-robin way. In both cases, we allocate to agents with smaller indices first.
\end{definition}

By \Cref{lem: fPO structure},  the allocation \textit{split-round-robin(i)} is fPO for all $i$. 
We introduce terminology to describe whether a \textit{split-round-robin} allocation is EF1. Let $i$ be an agent such that $1 \leq i < n$. We say that the allocation \textit{split-round-robin(i)} has \textit{$A$-envy} if there is an agent $j \leq i$ who has EF1-envy towards another agent $k > i$. Similarly, we say that the allocation \textit{split-round-robin(i)} has \textit{$B$-envy} if there is an agent $j > i$ who has EF1-envy towards another agent $k \leq i$.

Observe that \textit{split-round-robin(i)} is EF1 if and only if it does not have $A$-envy nor $B$-envy. We can now begin describing our algorithm for finding an EF1 and fPO allocation. \Cref{alg:ef1+fpo} begins by checking whether \textit{split-round-robin(i)} is EF1 for any $1 \leq i < n$. If so, then the algorithm has found an EF1 and fPO allocation. Otherwise, we create an allocation which is ordered with respect to a carefully chosen agent, who we call a \textit{split-agent}.
\begin{definition}
    An agent $i$ is a \textit{split-agent} if both of the following conditions hold:
    \begin{itemize}
        \item Either $i = 1$ or \textit{split-round-robin(i-1)} has $A$-envy, and
        \item Either $i = n$ or \textit{split-round-robin(i)} has $B$-envy.
    \end{itemize}
\end{definition}

\begin{lemma}
    \label{lem: split agent existence}
    If \textit{split-round-robin(i)} is not EF1 for all $1 \leq i < n$, then there exists a split-agent.
\end{lemma}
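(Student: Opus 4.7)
The plan is to locate a split-agent by tracking where $B$-envy first appears as $i$ grows, and arguing that the transition must also admit $A$-envy on the other side. Define $i^\star$ to be the smallest index in $\{1,\dots,n-1\}$ for which $\textit{split-round-robin}(i^\star)$ has $B$-envy, and set $i^\star = n$ if no such index exists. I claim that $i^\star$ is itself a split-agent, which I would verify by a three-case analysis on the value of $i^\star$.

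First, I would handle the boundary case $i^\star = 1$: then $\textit{split-round-robin}(1)$ has $B$-envy, and the first condition in the definition of split-agent is vacuously satisfied by $i=1$, so $i=1$ qualifies. Second, for the interior case $1 < i^\star \leq n-1$: by minimality of $i^\star$, the allocation $\textit{split-round-robin}(i^\star-1)$ does not have $B$-envy; but since it is not EF1 by hypothesis, it must then have $A$-envy. Combined with the fact that $\textit{split-round-robin}(i^\star)$ has $B$-envy by definition of $i^\star$, both clauses of the split-agent definition hold for $i=i^\star$. Third, the case $i^\star = n$: then no $\textit{split-round-robin}(i)$ for $1 \leq i \leq n-1$ has $B$-envy; in particular $\textit{split-round-robin}(n-1)$ has no $B$-envy, and being non-EF1 it must have $A$-envy. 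The second clause is vacuously satisfied because $i=n$, so $i=n$ is a split-agent.

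The only substantive step in each case is the observation that in every instance where a round-robin sub-allocation fails EF1 and lacks one kind of envy, it must exhibit the other kind. This follows immediately from the definition: $\textit{split-round-robin}(i)$ not being EF1 means some pair $(j,k)$ with $j$ EF1-envying $k$ exists, and such a pair is classified as either $A$-envy (if $j \leq i < k$) or $B$-envy (if $k \leq i < j$); no other classification is possible because agents on the same side of $i$ receive identical-type bundles in a round-robin manner and so the one with more items allocated last can only be EF1-envied up to that single item. I would include a brief sentence making this dichotomy explicit before running the case analysis.

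I expect no real obstacle here — the argument is a clean pigeonhole on the transition point from $A$-envy to $B$-envy, analogous to an intermediate value argument on the discrete parameter $i$. The care needed is just to handle the two boundaries ($i^\star=1$ and $i^\star=n$) where one of the two clauses in the definition of split-agent is satisfied vacuously rather than by exhibiting envy in a neighbouring allocation.
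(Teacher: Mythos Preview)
Your proposal is correct and follows essentially the same approach as the paper: both arguments hinge on the dichotomy that a non-EF1 \textit{split-round-robin}$(i)$ must have $A$-envy or $B$-envy, and then locate a transition index. The paper first checks whether $1$ or $n$ is a split-agent and, if not, invokes the transition implicitly; your version is slightly more explicit in that you define $i^\star$ as the first index with $B$-envy upfront and run the case split directly, but the underlying idea is identical.
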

\begin{proof}
    Observe that if \textit{split-round-robin(i)} is not EF1 (for any $1 \leq i < n$), it must have $A$-envy or $B$-envy.    
    If neither $1$ nor $n$ are split-agents, then \textit{split-round-robin(1)} has $A$-envy and \textit{split-round-robin(n-1)} has $B$-envy.
    Hence, there must exist some $1 < i < n$ such that \textit{split-round-robin(i-1)} has $A$-envy and \textit{split-round-robin(i)} has $B$-envy.
\end{proof}

We select a split-agent $i^*$, and will create an instance that is ordered with respect to $i^*$. We now explore a useful property of ordered allocations.
\begin{lemma}
    \label{lem: EF1 for i implies EF1}
    Let $I=(N, M, v)$ be a two chore types instance and $X$ be an allocation that is ordered with respect to agent $i^*$. Consider a modified valuation profile $ \tilde{v}$, where $\tilde{v}_j = v_{i^*}$ for all $j\in N$. If $X$ is EF1 with respect to the modified valuation profile $\tilde{v}$ then it is  EF1 in the original valuation profile $v$. 
\end{lemma}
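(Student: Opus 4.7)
The plan is to prove the lemma by direct case analysis on the ordered pair of agents $(j,k)$, leveraging the structure of the ordered allocation together with the non-decreasing ordering of the ratios $\frac{v_j^A}{v_j^B}$. The key exploitable fact is that $\frac{v_j^A}{v_j^B} \leq \frac{v_{i^*}^A}{v_{i^*}^B}$ for every $j < i^*$, with the reverse inequality for $j > i^*$. Since $\tilde{v}_{i^*} = v_{i^*}$, the case $j = i^*$ is trivial, and by the obvious symmetry (swap the two types and reverse the agent ordering) it suffices to handle $j < i^*$ in full; the case $j > i^*$ will be completely analogous.

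Fix $j < i^*$. Because $X$ is ordered with respect to $i^*$, we have $X_j = (\alpha_j, 0)$, so the only item that can be removed for EF1 is a type-$A$ item, and the EF1 condition towards $k$ reduces to $(\alpha_j - 1) v_j^A \geq v_j(X_k)$ (with the $\alpha_j = 0$ case trivial, since an empty bundle envies no one). I would then split on the position of $k$. For $k < i^*$, $X_k = (\alpha_k, 0)$ and the condition simplifies to the count inequality $\alpha_j \leq \alpha_k + 1$, which does not involve the valuation at all, so the modified and original conditions coincide. For $k > i^*$, $X_k = (0, \beta_k)$, and the condition becomes $(\alpha_j - 1) v_j^A \geq \beta_k v_j^B$; dividing by the negative quantity $v_j^B$ rearranges this to $(\alpha_j - 1)\frac{v_j^A}{v_j^B} \leq \beta_k$. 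The ratio inequality then closes the implication: if the analogous statement holds under $\tilde{v}$, namely $(\alpha_j - 1)\frac{v_{i^*}^A}{v_{i^*}^B} \leq \beta_k$, then since $\alpha_j - 1 \geq 0$, we get $(\alpha_j - 1)\frac{v_j^A}{v_j^B} \leq (\alpha_j - 1)\frac{v_{i^*}^A}{v_{i^*}^B} \leq \beta_k$.

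The remaining subcase is $k = i^*$, where $X_{i^*} = (\alpha_{i^*}, \beta_{i^*})$ may contain both types. Collecting terms, the EF1 condition becomes $(\alpha_j - \alpha_{i^*} - 1) v_j^A \geq \beta_{i^*} v_j^B$. When $\alpha_j \leq \alpha_{i^*} + 1$, the left-hand side is non-negative and the right-hand side is non-positive, so the inequality holds trivially in both the original and modified profiles. Otherwise, normalizing as before yields $(\alpha_j - \alpha_{i^*} - 1)\frac{v_j^A}{v_j^B} \leq \beta_{i^*}$, and the same ratio comparison transfers the modified EF1 guarantee to the original.

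The main obstacle here is bookkeeping of signs: because all valuations are strictly negative, dividing by $v_j^A$ or $v_j^B$ flips inequality directions, so each reduction must be handled with care. The natural edge cases to worry about are those where a coefficient like $\alpha_j - 1$ or $\alpha_j - \alpha_{i^*} - 1$ becomes non-positive, but in each such case the EF1 condition is either vacuous or immediate, so these never obstruct the ratio argument. Once the $j < i^*$ cases are dispatched, the $j > i^*$ argument is verbatim with $(\alpha,A)$ and $(\beta,B)$ swapped and the ratio inequality reversed, and the lemma follows.
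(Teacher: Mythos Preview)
Your proposal is correct and follows essentially the same approach as the paper: reduce to $j<i^*$ (with $j=i^*$ trivial and $j>i^*$ symmetric), note $X_j=(\alpha_j,0)$ forces removal of a type-$A$ item, and use the ratio inequality $\frac{v_j^A}{v_j^B}\le\frac{v_{i^*}^A}{v_{i^*}^B}$ to transfer the EF1 inequality from $\tilde v$ to $v$. The only difference is organizational: you split on the position of $k$ (three subcases), whereas the paper treats all $k$ at once by writing $X_k=(\alpha_k,\beta_k)$ and scaling the single inequality $(\alpha_j-1)v_{i^*}^A\ge \alpha_k v_{i^*}^A+\beta_k v_{i^*}^B$ by $v_j^A/v_{i^*}^A$, then using $v_j^A/v_{i^*}^A\le v_j^B/v_{i^*}^B$ on the $\beta_k$ term; this avoids the separate case analysis but is the same argument in substance.
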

\begin{proof}  
    As $X$ is ordered with respect to agent ${i^*}$, any agent $j < {i^*}$  has only type $A$ items i.e., $X_j = (\alpha_j, 0)$. Consider now some other agent $k\in N$. We show that if agent $j$ does not EF1-envy $k$ under a modified valuation  $\tilde{v}_j = v_{i^*}$, then $j$ does not EF1-envy $k$ in the original instance.

Observe that if $\alpha_j = 0$, then agent $j$ is not allocated any chores, and thus she does not have envy towards any other agent. Hence we assume that $\alpha_j > 0$. Since $X$ is EF1 under the modified valuation profile, agent $j$ does not EF1-envy $k$ when $\tilde{v}_j = v_{i^*}$. It follows that, 
    \begin{align}\label{EF1 for i implies EF1: equation}
      \tilde{v}_j(\alpha_j - 1, 0) &= (\alpha_j - 1)v_{i^*}^A \nonumber \\ 
    &\geq \alpha_{k} v_{i^*}^A + \beta_{k} v_{i^*}^B 
    \end{align}
    Recalling  $j< {i^*}$, we have $\frac{v_j^A}{v_j^B} \leq \frac{v_{i^*}^A}{v_{i^*}^B}$. Rearranging we have that $\frac{v_j^A}{v_{i^*}^A} \leq \frac{v_j^B}{v_{i^*}^B}$. As both sides of \Cref{EF1 for i implies EF1: equation} are non-positive, it follows that $(\alpha_j-1)v_j^A \geq \alpha_{k}v_j^A + \beta_{k}v_j^B$, and hence $j$ does not EF1-envy $k$ under the original valuation function.

    We can apply a similar argument for agents $j > {i^*}$.
\end{proof}

\begin{algorithm}
\caption{Computing an EF1 and fPO allocation}
\label{alg:ef1+fpo}

\SetKwInOut{Input}{Input}
\SetKwInOut{Output}{Output}
\SetKwComment{Comment}{$\triangleright$\ }{}
        
\Input{A fair allocation instance with two chore types, where all agents have strictly negative valuations}

\Output{An allocation which is EF1 and fPO}

\For{$i \gets 1$ \KwTo $n-1$} {
    \If{\textit{split-round-robin(i)} is EF1} {
        \Return \textit{split-round-robin(i)}
    }
}

$i^* \gets$ a \textit{split agent} \Comment*{Note that such an agent is guaranteed to exist by \Cref{lem: split agent existence}.}

$X = (X_1, ..., X_n) \gets$ an allocation where $X_{i^*} = M$ and $X_j = \emptyset$ for all $j \neq {i^*}$.

\While{$X$ is not EF1} { \label{ef1+po alg while-loop}
    $j \gets$ an agent in $\argmax_{j \in N \setminus {i^*}} v_{i^*}(X_j)$
    
    \If{$j < {i^*}$} {
        Transfer a type $A$ item from $X_{i^*}$ to $X_j$ \label{ef1+fpo alg transfer line A}
    } \Else {
        Transfer a type $B$ item from $X_{i^*}$ to $X_j$ \label{ef1+fpo alg transfer line B}
    }
}
\Return $X$
\end{algorithm}

\begin{theorem}
    \label{thm:ef1+fpo}
     Given a two chore types instance, \Cref{alg:ef1+fpo} finds an allocation that is EF1 and fPO in polynomial-time.
\end{theorem}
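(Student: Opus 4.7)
The plan is to verify three properties of Algorithm~\ref{alg:ef1+fpo}'s output: polynomial running time, fPO, and EF1. Each is a separate concern, with EF1 being the main work.

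\emph{Polynomial running time and fPO.} The initial \texttt{for} loop has $n-1$ iterations, each requiring a polynomial-time EF1 check on \textit{split-round-robin(i)}; if it returns, the output is ordered with respect to $i$ and hence fPO by \Cref{lem: fPO structure}. Otherwise, a split-agent $i^*$ exists by \Cref{lem: split agent existence} and is found in polynomial time. The \texttt{while} loop maintains the invariant that $X$ is ordered with respect to $i^*$: this holds initially (only $i^*$ has items), and each iteration preserves it because type-$A$ items go only to agents $j < i^*$ and type-$B$ items only to agents $j > i^*$ (lines~\ref{ef1+fpo alg transfer line A} and~\ref{ef1+fpo alg transfer line B}). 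Hence the returned $X$ is fPO by \Cref{lem: fPO structure}. Since each iteration strictly decreases $|X_{i^*}|$, the loop runs at most $m$ times, and every internal step (EF1 check, argmax computation) is polynomial.

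\emph{EF1.} The \texttt{for} loop branch is immediate. For the \texttt{while} loop, since the exit condition is EF1, the task reduces to showing that EF1 is reached before $X_{i^*}$ is drained of the item type the greedy rule demands. I would invoke \Cref{lem: EF1 for i implies EF1}: since $X$ is always ordered with respect to $i^*$, it suffices to show $X$ is eventually EF1 under the uniform valuation profile $\tilde{v}_j := v_{i^*}$ for every $j$. Under $\tilde{v}$ the greedy rule ``give an item to the agent $j \neq i^*$ with the largest $\tilde{v}(X_j)$'' is a least-loaded-first scheduler, and a straightforward induction shows that at every iteration the $\tilde{v}$-values of any two agents in $N \setminus \{i^*\}$ differ by at most $\max(|v_{i^*}^A|, |v_{i^*}^B|)$; this yields EF1 under $\tilde{v}$ among non-$i^*$ agents throughout. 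Meanwhile $\tilde{v}(X_{i^*})$ strictly increases each iteration while $\max_{j \neq i^*} \tilde{v}(X_j)$ weakly decreases, so the remaining EF1 violations involving $i^*$ eventually vanish.

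The main obstacle is to show that this crossing happens before $X_{i^*}$ is emptied of the demanded item type. I would argue by contradiction using the split-agent property. Suppose at some iteration the greedy attempts to transfer a type-$A$ item but $X_{i^*}$ has none. Then all $|A|$ type-$A$ items are distributed in balanced round-robin fashion among agents $1, \ldots, i^*-1$, matching the type-$A$ distribution of \textit{split-round-robin}$(i^*-1)$, while the type-$B$ items so far transferred are spread evenly among agents $i^*+1, \ldots, n$ by the greedy's balancing. A careful comparison to \textit{split-round-robin}$(i^*-1)$ --- using that $X_{i^*}$ still holds type-$B$ items, making $i^*$'s bundle no better for any left agent than $i^*$'s share in \textit{split-round-robin}$(i^*-1)$, and using the greedy's right-side balance to bound left-to-right envy by a single item's $\tilde{v}$-value --- would show that the current state is already EF1 under $\tilde{v}$, contradicting the loop's continuation. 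The symmetric case (the greedy demands type $B$ but $X_{i^*}$ has none) follows analogously using \textit{split-round-robin}$(i^*)$, completing the proof.
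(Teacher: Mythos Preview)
Your high-level decomposition matches the paper's: fPO from the ordered structure, reduction to the uniform profile $\tilde v$ via \Cref{lem: EF1 for i implies EF1}, and the greedy keeping non-$i^*$ agents mutually EF1 under $\tilde v$. The gap is in your contradiction argument for the key obstacle.

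You claim that at the moment the greedy demands a type-$A$ item but $X_{i^*}$ has none, the allocation is already EF1 under $\tilde v$, contradicting the loop's continuation. This is not true: your comparison to \textit{split-round-robin}$(i^*-1)$ bounds left-to-right envy and envy \emph{toward} $i^*$, but says nothing about $i^*$'s own envy. At that moment $i^*$ may still hold many type-$B$ items and EF1-envy a right-side agent with few $B$ items, so the loop legitimately continues and your contradiction evaporates. The split-agent property does not certify EF1 here; it should be used in the opposite direction. The paper closes the gap by introducing $X^L$, the first allocation at which $i^*$ no longer holds the $\tilde v$-minimum bundle. Prior to and at $X^L$, no agent EF1-envies $i^*$, so together with the greedy invariant, \emph{no non-$i^*$ agent has any EF1-envy}. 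Claim~1 then argues: if $X_{i^*}$ had no $A$ items at such a point, the $A$-envy of \textit{split-round-robin}$(i^*-1)$ forces some left agent $j<i^*$ to EF1-envy some agent $k\ge i^*$ in $X$ (since in $X$ the left side matches \textit{split-round-robin}$(i^*-1)$ and some agent in $\{i^*,\dots,n\}$ holds at most $\lfloor |B|/(n-i^*+1)\rfloor$ type-$B$ items), contradicting the invariant. Separately the paper shows $X^L$ itself is EF1, so termination occurs no later than $X^L$, before any supply is exhausted.

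A smaller issue: your invariant ``$\tilde v$-values of non-$i^*$ agents differ by at most $\max(|v_{i^*}^A|,|v_{i^*}^B|)$'' does not by itself yield EF1 among them --- if the envier's bundle contains only the lighter item type, removing one item need not close a gap of size $\max$. The correct inductive argument is that the agent who just received an item was the argmax immediately before, so removing that specific item restores a best non-$i^*$ bundle.
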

\begin{proof}

First observe that the algorithm only outputs an ordered allocation and thus fPO by \Cref{lem: fPO structure}. Furthermore  if \textit{split-round-robin(i)} is EF1 for some $i$ then the algorithm returns an allocation that is both EF1 and fPO immediately. Thus the main challenge is to analyse the algorithm on instances where \textit{split-round-robin(i)} is not EF1 for any $1\leq i < n$. In the remainder of the proof we restrict our attention to these instances.

Recall that by \Cref{lem: split agent existence} there exists a split agent $i^*$.
At a high level the algorithm transfers items from the split agent to other agents until the allocation becomes EF1 whilst maintaining that the allocation is ordered with respect to $i^*$.

Consider now a modified valuation $\tilde{v}_j = v_{i^*}$ for all $j\in N$. We show that the algorithm outputs an EF1 allocation with respect to the modified instance. By  \Cref{lem: EF1 for i implies EF1}, the same allocation is also EF1 with respect to the original instance. In the modified instance,  there is no EF1-envy among all agents other than $i^*$ since their bundles are formed by repeatedly transferring an item to the agent with the highest valuation. In particular, if $X$ is not EF1, this must be due to EF1-envy that agent $i^*$ has for another agent, or EF1-envy that another agent has towards agent $i^*$. 

Let $X^L$ be the earliest allocation $X$ encountered in \Cref{alg:ef1+fpo} where $v_{i^*}(X_{i^*})\neq \min_{j\in N} v_{i^*}(X_j)$ holds (assuming that \Cref{alg:ef1+fpo} does not terminate prior to this). If \Cref{alg:ef1+fpo} terminates prior to $X^L$, for simplicity we say that every allocation in the algorithm is prior to $X^L$.
We show that for all allocations $X$ prior to (and including) $X^L$, no agent has  EF1-envy towards $i^*$. The statement holds for all allocations prior to $X^L$  from the definition of $X^L$.  We now prove that in the allocation $X^L$,  no agent has EF1-envy  towards $i^*$. Let $X'$ be the allocation immediately prior to $X^L$. Note that $X'$ is not EF1, or otherwise \Cref{alg:ef1+fpo} would have terminated. By definition of $X'$, we have $v_{i^*}(X'_{i^*}) = \min_{j \in N}v_{i^*}(X'_j)$. Since $X'$ is not EF1 it follows that agent ${i^*}$ must EF1-envy agent $j$, where $j \in \argmax_{j \in N \setminus \{{i^*}\}} v_{i^*}(X'_j)$. Note that $j$ is the agent who was transferred an item in \Cref{alg:ef1+fpo} when the allocation $X^L$ was created. 
Since the bundle $X^L_{i^*}$ has one less item than $X'_{i^*}$ and agent ${i^*}$ had EF1-envy towards $j$ when the allocation was $X'$, it follows that $v_{i^*}(X^L_{i^*}) < v_{i^*}(X_j')$. For all agents $k\neq j,i^*$, their bundle is unchanged between $X'_k$ and  $X^L_k$. Because agent $k$ does not EF1-envy the bundle $X'_j$ they do not EF1-envy the even worse bundle $X^L_{i^*}$. Therefore in the allocation $X_L$, no agent has EF1-envy towards $i^*$. \break

\noindent\textbf{Claim 1:} For all allocations $X$ prior to and including $X^L$, we have that $X_{i^*}$ contains at least one item of each type. 
\begin{proof}[Proof of Claim 1] Recall that no agent has EF1 envy towards $i^*$ and thus every agent other than $i^*$ has no EF1-envy towards any agent. 

We first prove that $X_{i^*}$ has at least one type $A$ item. If ${i^*} = 1$, then this is immediately true. Otherwise, assume ${i^*} > 1$. We proceed by contradiction.
    Assume that $X_{i^*}$ has no type $A$ items. 
    Then, agents $1$ through $i^*-1$ have all the type $A$ items, and agents $i$ through $n$ have all the type $B$ items, just as in the allocation \textit{split-round-robin}$(i^*-1)$.
    However, because $i^*$ is a split-agent, we know that \textit{split-round-robin}$(i^*-1)$ has $A$-envy. Thus there must exist some agent $j < i^*$ who has EF1-envy towards another agent $k \geq i^*$ in $X$ which is a contradiction.

We now prove that there is at least one type $B$ item. If $i^*=n$, it follows immediately. Otherwise, if $i^*<n$, we can use a symmetrical argument to the type $A$ item case. 
\end{proof}

In the next paragraph, we will show that the algorithm terminates (i.e. returns an EF1 allocation) prior to or at allocation $X^L$. Therefore, by Claim 1, whenever \Cref{ef1+fpo alg transfer line A} is reached, $X_{i^*}$ has at least one type $A$ item, and whenever \Cref{ef1+fpo alg transfer line B} is reached, $X_{i^*}$ has at least one type $B$ item. 

If the algorithm terminates prior to $X^L$ then we are done. Otherwise, if every allocation prior to $X^L$ is not EF1 then we show that $X^L$ must be EF1. By the definition of $X^L$, there exists some agent $k$ such that $v_{i^*}(X^L_k) < v_{i^*}(X^L_{i^*})$. Since no agent in $N \setminus {i^*}$ has any EF1-envy towards any other agent in $N$, it follows that $k$ does not EF1-envy any agent i.e., there exists some chore $r \in X^L_k$ such that $v_{i^*}(X^L_k\setminus r) \geq v_{i^*}(X^L_l)$ for all agents $l$. 
By Claim 1, $X^L_{i^*}$ contains at least one item of each type, and therefore contains an item $r'$ of the same type as $r$.
Therefore $v_{i^*}(X^L_{i^*} \setminus r') > v_{i^*}(X^L_k\setminus r) \geq v_{i^*}(X^L_l)$ for all $l \in N$. Hence $X^L$ is EF1 with respect to the modified instance. 

As for time complexity,  the algorithm runs in polynomial-time since the while loop on \Cref{ef1+po alg while-loop} can only run at most $m$ times. 
\end{proof}

\noindent\textbf{EFX and fPO are not always compatible}

A natural extension of \Cref{thm:ef1+fpo} is to ask whether an allocation always exists that is EFX and fPO. 
Here, we disprove this by providing an instance with no allocation that is both EFX and fPO. 

Consider an instance with 3 agents, where $v_1^A = -10$, $v_2^A = -11$, $v_3^A = -12$, and $v_1^B = v_2^B = v_3^B = -1$.
There are 3 type $A$ items and 2 type $B$ items.
For the allocation to be EFX, each agent must receive one type $A$ item.
Otherwise, one agent would receive at least 2 type $A$ items and another agent would receive no type $A$ items, which cannot be EFX.
However, if the allocation is fPO it must satisfy \Cref{lem: fPO structure} and so agent 3 must receive both type $B$ items.
However, this is not EFX.
Hence, in this instance, there does not exist any allocation that is both EFX and fPO.

Due to this nonexistence result, we instead consider the question of whether an EFX allocation always exists.

\section{EFX}
\label{sec:EFX}

In this section, we give an algorithm to compute an EFX allocation of chores when there are two item types. Our first observation is that important algorithms for chore allocation as well natural adaptations for fair allocation of goods to the case of chores do not give EFX guarantees even for two item types. These include two algorithms (``The Top-trading Envy Cycle Elimination Algorithm'' and ``The Bid-and-Take Algorithm'') for PROPX allocations by  \citet{LLW21a} as well as an adaptation the algorithm of \citet{GMV22a} to the case of chores.
This is detailed in \Cref{section: Failure of Existing EFX Approaches}.

The main result of this section is \Cref{thm: efx}, which we use the remainder of this section to prove.
\begin{theorem}
    \label{thm: efx}
   For two chore type instances, an EFX allocation always exists and can be found in polynomial-time.
\end{theorem}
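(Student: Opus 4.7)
The plan is to mirror the high-level strategy that worked for EF1+fPO: first identify a useful structural form for candidate EFX allocations, then design an algorithm that constructs an allocation of that form. The EFX constraint is much tighter than EF1, so we cannot afford the luxury of restricting to fully ordered allocations in the sense of Lemma~\ref{lem: fPO structure} -- the EFX/fPO incompatibility example just above Section 5 shows that the boundary agent may need to receive a mixed bundle of both types.

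I would first establish a structural lemma: there always exists an EFX allocation in which there is a boundary index $k$ such that agents $1,\ldots,k-1$ receive only type $A$ items, agents $k+1,\ldots,n$ receive only type $B$ items, and the boundary agent $k$ may receive a mix of both types. The argument is a local-exchange one: given any EFX allocation, swapping a type $B$ item in a lower-index agent's bundle with a type $A$ item in a higher-index agent's bundle weakly improves both agents (since their preference ratios $v_i^A/v_i^B$ are ordered by construction), and therefore preserves EFX. Iterating such swaps straightens the allocation without destroying EFX.

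Given this structure, I would describe an algorithm that enumerates polynomially many parameter triples: the boundary index $k$, the number $\alpha_k$ of type-$A$ items given to agent $k$ (with the rest going to $\{1,\ldots,k-1\}$), and the number $\beta_k$ of type-$B$ items given to agent $k$ (with the rest going to $\{k+1,\ldots,n\}$). For each triple, we distribute the remaining type-$A$ items among $\{1,\ldots,k-1\}$ by round-robin (ensuring EFX holds within the $A$-block, since bundle sizes differ by at most one) and similarly the type-$B$ items by round-robin among $\{k+1,\ldots,n\}$. We then check all pairwise EFX constraints and return the first allocation that passes. Since there are $O(nm^2)$ parameter choices and each check is $O(n^2)$, this runs in polynomial time.

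The main obstacle will be the existence proof: showing that for every two chore types instance, some parameter triple yields an EFX allocation. I expect this to require a case analysis on whether $k \in N_A$ or $k \in N_B$, together with a monotonicity argument. As we increase $\alpha_k$, the constraints ``agent $k$ does not EFX-envy $\{1,\ldots,k-1\}$'' progressively tighten, while the constraints ``agents in $\{1,\ldots,k-1\}$ do not EFX-envy agent $k$'' progressively loosen, and symmetrically for $\beta_k$ with respect to the $B$-block. The existence argument would follow from a sweeping/intermediate-value style claim: as the pair $(\alpha_k, \beta_k)$ is varied, the set of violated constraints flips direction, and at the transition point all constraints can be simultaneously satisfied for an appropriately chosen $k$. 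Handling the boundary cases (e.g.\ when $k$ strongly prefers one type) and ensuring the round-robin order within each block agrees with the cross-block constraints is where the subtlety lies, and is likely the reason the natural adaptation of the \citet{GMV22a} goods algorithm fails.
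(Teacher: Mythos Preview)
Your structural lemma is false, and the very instance used in the paper to separate EFX from fPO is a counterexample. Take three agents with $v_1^A=-10$, $v_2^A=-11$, $v_3^A=-12$ and $v_i^B=-1$ for all $i$, with three type-$A$ items and two type-$B$ items. Every EFX allocation here gives each agent exactly one type-$A$ item and splits the two type-$B$ items between two distinct agents, e.g.\ $X_1=(1,0)$, $X_2=(1,1)$, $X_3=(1,1)$. No such allocation has a single boundary agent: at least two agents must hold mixed bundles. Your swap argument also fails on this example. In $(1,0),(1,1),(1,1)$, agent~2 has a $B$ and agent~3 has an $A$; swapping yields $(1,0),(2,0),(0,2)$, where agent~2's utility drops from $-12$ to $-22$ (so the swap does not weakly improve both agents) and agent~2 now EFX-envies agent~3 since $v_2((1,0))=-11<v_2((0,2))=-2$. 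The inequality $\frac{v_i^A}{v_i^B}\le\frac{v_j^A}{v_j^B}$ says nothing about whether $i$ individually prefers $A$ to $B$; it only compares ratios across agents.

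The paper's proof takes a completely different route. It first disposes of the boundary case $|A|\le|N_A|$ or $|B|\le|N_B|$ directly. In the main case it builds a carefully chosen \emph{partial} EFX allocation $X^*$ in which all type-$B$ items (and possibly some type-$A$ items) are already placed, and then iteratively allocates the remaining type-$A$ items via two update rules: Rule~1 gives one $A$ to every agent in $N_B$ simultaneously when this preserves EFX, and Rule~2 gives one $A$ to a non-envious agent in $N_A$. The heart of the argument is not a monotonicity sweep but a delicate case analysis (four subcases for $X^*$, driven by whether agents ``strongly prefer'' one type) together with Lemmas~\ref{using the update rules}--\ref{N_B same size}, which guarantee that Rule~2 can be applied enough times whenever Rule~1 is blocked. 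The resulting EFX allocations are generally \emph{not} ordered in your sense: agents in $N_B$ routinely receive type-$A$ items via Rule~1, which is precisely what your single-boundary structure forbids.
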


\subsection{Allocation algorithm when \texorpdfstring{ $|A| \leq |N_A|$ or $|B| \leq |N_B|$}{lg}}
\label{subsection: |A| <= |N_A| or |B| <= |N_B|}

The algorithm in \Cref{subsection: |A| > |N_A| and |B| > |N_B|} requires $|A| > |N_A|$ and $|B| > |N_B|$, and so we begin with an algorithm for when this does not hold. Assume without loss of generality that $|A| \leq |N_A|$. Let $k = \left\lfloor \frac{|B|}{n} \right\rfloor$.
We allocate $k$ type $B$ items to all agents, and let $b$ be the number of unallocated type $B$ items.
Note that $0 \leq b < n$.
We consider two cases, depending on $b$.

\paragraph{Case 1: $b \leq |N_B|$}
We allocate 1 more type $B$ item to any $b$ agents from $N_B$ and allocate up to 1 type $A$ item to agents in $N_A$. Let $N_B' \subseteq N_B$ be the set of agents who receive an extra type $B$ item and let $N_A' \subseteq N_A$ be the set of agents who receive a type $A$ item. Then, the allocation is:
\begin{equation*}
    X^*_i =
    \begin{cases}
      (1, k) & \text{for $i \in N_A'$,} \\
      (0, k) & \text{for $i \in N_A \setminus N_A'$,} \\
      (0, k) & \text{for $i \in N_B \setminus N_B'$,} \\
      (0, k+1) & \text{for $i \in N_B'$.} \\
    \end{cases}
\end{equation*}
Note that this allocation is EFX, completing this case.

\paragraph{Case 2: $b > |N_B|$}
Let $N_A' = \{ \,  i \in N_A : i > n-b \,  \}$ and note that $|N_A'| = b - |N_B|$.
We allocate one more type $B$ item to all agents in $N_B \cup N_A'$.
This gives us the following partial allocation, where all type $B$ items are allocated:
\begin{equation*}
    X^*_i =
    \begin{cases}
      (0, k) & \text{for $i \in N_A \setminus N_A'$,} \\
      (0, k+1) & \text{for $i \in N_A'$,} \\
      (0, k+1) & \text{for $i \in N_B$.} \\
    \end{cases}
\end{equation*}
Now, let $l \geq 1$ be the largest integer such that $l v_i^A \geq v_i^B$ for all $i \in N_A \backslash N_A'$.
We assign type $A$ items to agents in $N_A \backslash N_A'$ in a round-robin way until no type $A$ items remain or all agents in $N_A \backslash N_A'$ have $l+1$ type $A$ items. 
Note that this (potentially partial) allocation is EFX due to the selection of $l$.
In particular:
\begin{itemize}
	\item All agents in $N_A' \cup N_B$ have the same bundle, and so there is no envy between these agents. Additionally, any envy from an agent $i \in N_A' \cup N_B$ towards an agent $i' \in N_A \setminus N_A'$ disappears if one type $B$ item is removed from the bundle of agent $i$. Hence there is no EFX-envy from agents in $N_A' \cup N_B$ towards any other agent.
	\item The bundles of agents in $N_A \setminus N_A'$ differ by at most one type $A$ item, and so there is no EFX-envy between any of these agents. Additionally, since $l v_i^A \geq v_i^B$ for all $i \in N_A \backslash N_A'$, any envy between an agent $i \in N_A \backslash N_A'$ towards an agent $j \in N_A' \cup N_B$ disappears if one type $A$ item is removed from the bundle of agent $i$. Hence there is no EFX-envy from agents in $N_A \backslash N_A'$ towards any other agent.
\end{itemize}

Let $a \leq |N_A'|$ be the number of unallocated type $A$ items. 
If $a = 0$, we have an EFX allocation.
Otherwise, assume that $a > 0$. Then, the current EFX partial allocation is:
\begin{equation*}
    X^*_i =
    \begin{cases}
      (l+1, k) & \text{for $i \in N_A \setminus N_A'$,} \\
      (0, k+1) & \text{for $i \in N_A'$,} \\
      (0, k+1) & \text{for $i \in N_B$.} \\
    \end{cases}
\end{equation*}

By the definition of $l$, we know there exists an agent $i \in N_A \setminus N_A'$ such that  $(l+1)v_i^A < v_i^B$.
Since the agents in $N_A \setminus N_A'$ are the agents with the smallest $\frac{v_i^A}{v_i^B}$, it must hold that $(l+1)v_i^A < v_i^B$ for all $i \in N_A' \cup N_B$.
Therefore, no agent $i \in N_A'$ envies any agent $i' \in N_A \setminus N_A'$ and so it follows that for all $i \in N_A'$, agent $i$ is envy-free.
We can then complete the allocation by selecting $a$ agents arbitrarily from $N_A'$ and allocating each of them one type $A$ item.

\subsection{Allocation algorithm when \texorpdfstring{ $|A| > |N_A|$ and $|B| > |N_B|$ }{ } }
\label{subsection: |A| > |N_A| and |B| > |N_B|}

In this section, we prove that \Cref{alg:efx} always finds an EFX allocation in polynomial-time.
We assume without loss of generality that $|N_A| \geq |N_B|$.

We begin with an overview of \Cref{alg:efx}. \Cref{alg:efx} starts by computing an EFX partial allocation $X^*$ on \Cref{efx alg initial allocation}.
In this initial allocation, all type $B$ (and potentially some type $A$) items are allocated.
\Cref{alg:efx} then applies one of following two update rules until all type $A$ items are allocated:
\begin{itemize}
    \item Rule 1 (\Cref{efx alg rule 1}). Let $a$ be the number of unallocated type $A$ items and let $X' = (X_1', ..., X_n')$ be an allocation where $X_i' = X_i$ for all $i \in N_A$ and $X_j' = X_j \uplus (1, 0)$ for all $j \in N_B$. 
    If $a \geq |N_B|$ and $X'$ is EFX, then set $X$ to be $X'$. We refer to the condition ``$X'$ is EFX'' as the ``EFX condition of Rule 1''.
    
    \item Rule 2 (\Cref{efx alg rule 2}).
    If Rule 1 does not apply, then let $i \in N_A$ be an agent who is envy-free (we will prove that such an agent always exists under our choice of $X^*$).
    We allocate a type $A$ item to $i$.
\end{itemize}

Note that both rules preserve EFX. 
In particular, Rule 1 preserves EFX by definition, and Rule 2 preserves EFX because any envy that agent $i$ has will disappear if a single type $A$ item is removed from their bundle.
Hence, if \Cref{alg:efx} returns, then the returned allocation will be EFX.
Additionally, \Cref{alg:efx} runs in polynomial-time because the update rules will be applied at most $m$ times.

However, it is not guaranteed that the updates rules can always be applied for every choice of $X^*$: \Cref{ex:EFX-alg-fails} demonstrates a case where neither rule can be applied.
Therefore, the initial allocation $X^*$ must be chosen carefully so that a situation similar to \Cref{ex:EFX-alg-fails} never occurs. 
In particular, for the chosen initial allocation $X^*$ we must show that whenever \Cref{efx alg rule 2 agent} is reached, there always exists an agent $i \in N_A$ where $v_i(X_i) \geq v_i(X_j)$ for all $j \in N$.
We introduce some terminology to reason about this: if there exists such an agent $i$, we say that ``Rule 2 can be applied''.
If it is possible to apply Rule 2 $k$ times consecutively, then we say that ``Rule 2 can be applied $k$ times''.
Note that we use these terms regardless of whether Rule 1 can be applied. 

\begin{example}
\begin{table}[h]
    \centering
   \begin{tabular}{C{0.09\textwidth}C{0.09\textwidth}C{0.09\textwidth}C{0.09\textwidth}}
    	\toprule
	\textbf{Agents} & \textbf{$v_i^A$} & \textbf{$v_i^B$} & \textbf{$X_i$}\\
	\midrule
	1 & $-1$ & $-5$ & $(1, 1)$ \\
 	2 & $-5$ & $-1$ & $(0, 1)$ \\
	3 & $-5$ & $-1$ & $(0, 2)$ \\
	\bottomrule
    \end{tabular}
	
\end{table}
\label{ex:EFX-alg-fails}
An instance with an EFX allocation $X$. If there is $a = 1$ unallocated type $A$ item, then neither update rule can be applied.
In particular, Rule 1 cannot be applied because there are insufficient unallocated items. Rule 2 cannot be applied because agent 1 would EFX-envy agent 2 if the rule were to be applied.
\end{example}

\begin{algorithm}
\caption{Computing an EFX allocation}
\label{alg:efx}
\DontPrintSemicolon

\SetKwInOut{Input}{Input}
\SetKwInOut{Output}{Output}
\SetKwComment{Comment}{$\triangleright$\ }{}
        
\Input{A fair allocation instance with two chore types, where all agents have strictly negative valuations and $|N_A| \geq |N_B|$}

\Output{An EFX allocation}

\If{$|A| \leq |N_A|$ or $|B| \leq |N_B|$} {
    \Return the allocation described in \Cref{subsection: |A| <= |N_A| or |B| <= |N_B|}
}

$X = (X_1, ..., X_n) \gets X^*$, an initial partial EFX allocation, described in \Cref{subsection: deciding the initial allocation} \label{efx alg initial allocation}

\While{$X$ is a partial allocation} {
    $a \gets$ the number of unallocated type $A$ items
    
    $X' = (X_1', ..., X_n') \gets$ an allocation where $X_i' = X_i$ for all $i \in N_A$ and $X_j' = X_j \uplus (1, 0)$ for all $j \in N_B$
    
    \If{$a \geq |N_B|$ and $X'$ is EFX} {
        $X \gets X'$ \label{efx alg rule 1} \Comment*{Rule 1}
    } \Else {
        $i \gets$ an agent in $N_A$ where $v_i(X_i) \geq v_i(X_j)$ for all $j \in N$ \label{efx alg rule 2 agent}
        
        $X_i \gets X_i \uplus (1, 0)$ \label{efx alg rule 2} \Comment*{Rule 2}
    }
}

\Return $X$
\end{algorithm}

The remainder of this section is structured as follows:
We begin by introducing some results in \Cref{no mutual envy}-\ref{lem:rule1 not applied twice} that are helpful later in the section.
We then provide conditions for $X^*$ under which \Cref{alg:efx} always finishes and returns an allocation. 
In particular, both \Cref{using the update rules} and \Cref{N_B same size} give sufficient conditions for $X^*$. 
Finally, in \Cref{subsection: deciding the initial allocation}, we show how to compute the initial allocation $X^*$. 
To do this, we must consider several cases that together cover every possible input instance for \Cref{alg:efx}.
In every case, we show that we can find an initial allocation $X^*$ that satisfies the criteria of \Cref{using the update rules} or \Cref{N_B same size}.

\begin{lemma}
    \label{no mutual envy}
    Let $i$ and $j$ be two agents, and let $X$ be an allocation. 
    If $i > j$ and $X_i$ has at least as many type $B$ items as $X_j$, then $i$ and $j$ cannot both envy each other.
    That is, if $i$ envies $j$, then $j$ does not envy $i$.
\end{lemma}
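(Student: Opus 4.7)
The plan is to reduce everything to algebra on the bundle sizes and then exploit the agent ordering $\frac{v_j^A}{v_j^B}\le\frac{v_i^A}{v_i^B}$ (which holds because $j<i$) to pit the two envy inequalities against one another. Write $X_j=(\alpha_j,\beta_j)$ and $X_i=(\alpha_i,\beta_i)$, set $\Delta\alpha=\alpha_i-\alpha_j$ and $\Delta\beta=\beta_i-\beta_j$, so the hypothesis ``$X_i$ has at least as many type $B$ items as $X_j$'' is exactly $\Delta\beta\ge 0$. My strategy is to assume that $j$ envies $i$ and deduce the strict inequality $v_i(X_i)>v_i(X_j)$, which then rules out $i$ envying $j$.

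First I would unpack ``$j$ envies $i$'' as $\Delta\alpha\,v_j^A+\Delta\beta\,v_j^B>0$. Since $v_j^B<0$ and $\Delta\beta\ge 0$, the second summand is nonpositive, so we must have $\Delta\alpha\,v_j^A>0$; combined with $v_j^A<0$ (using the standing assumption that all valuations are strictly negative), this forces $\Delta\alpha<0$. Dividing the envy inequality by $v_j^B<0$, which flips its direction, then yields $\Delta\alpha\cdot\frac{v_j^A}{v_j^B}<-\Delta\beta$.

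Next I would bring in the ordering. From $i>j$ we have $\frac{v_j^A}{v_j^B}\le\frac{v_i^A}{v_i^B}$; multiplying by the negative quantity $\Delta\alpha$ flips this to $\Delta\alpha\cdot\frac{v_j^A}{v_j^B}\ge\Delta\alpha\cdot\frac{v_i^A}{v_i^B}$. Chaining the two inequalities gives $\Delta\alpha\cdot\frac{v_i^A}{v_i^B}<-\Delta\beta$, and multiplying through by $v_i^B<0$ (flipping once more) produces $\Delta\alpha\,v_i^A+\Delta\beta\,v_i^B>0$, i.e., $v_i(X_i)>v_i(X_j)$. Thus $i$ strictly prefers her own bundle and cannot envy $j$, as desired.

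The only real obstacle is careful sign-tracking: because every valuation is negative, each division or multiplication used to rearrange the envy conditions flips the inequality, and one must invoke the ratio ordering at exactly the right moment (after isolating the ratio, and with a known-negative $\Delta\alpha$) for the comparison to go through. Beyond this bookkeeping there is no combinatorial content. As a sanity check, the boundary case $\Delta\beta=0$ reduces the argument to the familiar fact that with identical $B$-holdings only the agent with strictly fewer $A$-items can envy, so mutual envy is impossible; the general case simply shows that the ``extra $B$'' given to $i$ cannot be cancelled by $j$'s relatively greater distaste for $A$.
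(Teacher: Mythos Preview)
Your proof is correct and uses essentially the same algebraic idea as the paper: exploit the ordering $\frac{v_j^A}{v_j^B}\le\frac{v_i^A}{v_i^B}$ to transfer one agent's envy inequality into a non-envy inequality for the other. The only cosmetic difference is that the paper assumes $i$ envies $j$ and deduces that $j$ does not envy $i$, whereas you assume $j$ envies $i$ and deduce that $i$ does not envy $j$; since these are the two equivalent forms of ``no mutual envy'', the arguments are mirror images of one another.
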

\begin{proof}
    First, note that $\frac{v_i^A}{v_i^B} \geq \frac{v_j^A}{v_j^B}$.
    Let $X_i = (\alpha_i, \beta_i)$ and $X_j = (\alpha_j, \beta_j)$.
    Assume that $i$ envies $j$, and so 
    $$\alpha_i v_i^A + \beta_i v_i^B < \alpha_j v_i^A + \beta_j v_i^B.$$
    Rearranging this gives
    $$v_i^B(\beta_i - \beta_j) < v_i^A(\alpha_j - \alpha_i).$$
    Since $\frac{v_j^B}{v_i^B} \geq \frac{v_j^A}{v_i^A}$ and the left side of the above equation is non-positive (since $\beta_i \geq \beta_j$ and $v_i^B < 0$), it follows that
    $$v_j^B(\beta_i - \beta_j) < v_j^A(\alpha_j - \alpha_i),$$
    and so 
    $$\alpha_i v_j^A + \beta_i v_j^B < \alpha_j v_j^A + \beta_j v_j^B.$$
    Hence $j$ does not envy $i$.
\end{proof}

\begin{lemma}
    \label{envy means less items}
    Let $i \in N_A$ and $j \in N_B$ be two agents, and let $X_i$ and $X_j$ be their bundles. If $X_j$ has strictly more type $B$ items than $X_i$ and $j$ EFX-envies $i$, then $|X_i| < |X_j|-1$.
\end{lemma}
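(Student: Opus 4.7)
I would proceed by contradiction. Write $X_i = (\alpha_i, \beta_i)$ and $X_j = (\alpha_j, \beta_j)$. The hypothesis $\beta_j > \beta_i$ gives $\beta_j \geq \beta_i + 1 \geq 1$, so $X_j$ contains at least one type $B$ chore. Since $j \in N_B$, we have $v_j^A \leq v_j^B < 0$, so a type $B$ chore is the least harmful item for $j$; in particular, among all chores in $X_j$, removing a type $B$ chore minimizes $v_j(X_j \setminus h)$. Hence ``$j$ EFX-envies $i$'' is equivalent to
\[
\alpha_j v_j^A + (\beta_j - 1) v_j^B < \alpha_i v_j^A + \beta_i v_j^B. \tag{$\star$}
\]

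Now assume for contradiction that $|X_i| \geq |X_j| - 1$, i.e.\ $\alpha_i + \beta_i \geq \alpha_j + \beta_j - 1$. Combined with $\beta_j \geq \beta_i + 1$, this forces
\[
\alpha_i - \alpha_j \geq \beta_j - 1 - \beta_i \geq 0, \tag{$\dagger$}
\]
so in particular $\alpha_i \geq \alpha_j$.

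Rewriting $(\star)$ in terms of the positive quantities $|v_j^A|$ and $|v_j^B|$ (both sides are non-positive, so flipping signs reverses the inequality) yields
\[
(\alpha_i - \alpha_j)\,|v_j^A| \;<\; (\beta_j - 1 - \beta_i)\,|v_j^B|.
\]
Since $j \in N_B$ we have $|v_j^A| \geq |v_j^B| > 0$, and $\alpha_i - \alpha_j \geq 0$, so
\[
(\alpha_i - \alpha_j)\,|v_j^B| \;\leq\; (\alpha_i - \alpha_j)\,|v_j^A| \;<\; (\beta_j - 1 - \beta_i)\,|v_j^B|.
\]
Dividing by $|v_j^B| > 0$ gives $\alpha_i - \alpha_j < \beta_j - 1 - \beta_i$, directly contradicting $(\dagger)$. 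Therefore $|X_i| < |X_j| - 1$, as required.

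The only subtlety is correctly identifying that the tightest EFX witness is removal of a type $B$ chore (which exists because $\beta_j \geq 1$); once $(\star)$ is in hand, the arithmetic is immediate, driven by the two inequalities $\alpha_i \geq \alpha_j$ (forced by the contradiction assumption) and $|v_j^A| \geq |v_j^B|$ (definition of $N_B$).
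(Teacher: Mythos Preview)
Your proof is correct and follows essentially the same approach as the paper: both identify the removal of a type $B$ chore as the binding EFX witness, obtain the inequality $(\alpha_i-\alpha_j)|v_j^A| < (\beta_j-1-\beta_i)|v_j^B|$, and then use $|v_j^A|\ge |v_j^B|$ together with $\beta_j-1-\beta_i\ge 0$ to conclude $\alpha_i-\alpha_j < \beta_j-1-\beta_i$. The only cosmetic difference is that you frame it as a contradiction (first assuming $\alpha_i-\alpha_j\ge \beta_j-1-\beta_i\ge 0$) whereas the paper argues directly; your version has the minor advantage of making the sign of $\alpha_i-\alpha_j$ explicit before applying $|v_j^A|\ge |v_j^B|$.
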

\begin{proof}
    Let $X_i = (\alpha_i, \beta_i)$ and $X_j = (\alpha_j, \beta_j)$.
    Since $j$ EFX-envies $i$, 
    $$v_j((\alpha_i, \beta_i)) > v_j((\alpha_j, \beta_j-1)),$$ 
    and hence
    $$\alpha_i v_j^A + \beta_i v_j^B > \alpha_j v_j^A + (\beta_j-1)v_j^B.$$
    Rearranging this gives
    $$(\alpha_i - \alpha_j)v_j^A > ((\beta_j-1) - \beta_i)v_j^B.$$
    Noting that $(\beta_j-1) - \beta_i \geq 0$ and $0 > v_j^B \geq v_j^A$, it follows that 
    $$\alpha_i - \alpha_j < (\beta_j-1) - \beta_i,$$
    and so $\alpha_i+\beta_i < \alpha_j+\beta_j-1$, implying that $|X_i| < |X_j|-1$. 
\end{proof}

\begin{lemma}
    \label{lem:less items implies no envy}
    Let $X$ be an EFX allocation where all agents $j \in N_B$ have strictly more type $B$ items than all agents $i \in N_A$. 
    If there exists an agent $i \in N_A$ such that $|X_i| < |X_j|$ for all $j \in N_B$, then for all $i' \in N_A$ and $j' \in N_B$ it holds that agent $i'$ does not envy agent $j'$.
\end{lemma}
\begin{proof}  
    Assume that there exists an agent $i \in N_A$ such that $|X_i| < |X_j|$ for all $j \in N_B$.
    Consider any $i' \in N_A$.
    Since $X$ is EFX, we know that agent $i'$ does not EFX-envy agent $i$ and so $v_{i'}(X_{i'}) - v_{i'}^B \geq v_{i'}(X_{i})$. Rearranging, this tells us that $v_{i'}(X_{i'}) \geq v_{i'}(X_{i}) + v_{i'}^B$.
    Now, consider any $j' \in N_B$.
    Since $|X_i| < |X_{j'}|$ and $X_i$ has less type $B$ items than $X_{j'}$, it follows that $v_{i'}(X_i) + v_{i'}^B \geq v_{i'}(X_{j'})$.
    Therefore $v_{i'}(X_{i'}) \geq v_{i'}(X_i) + v_{i'}^B \geq v_{i'}(X_{j'})$ and so $i'$ does not envy $j'$.
\end{proof}

\begin{lemma}
    \label{lem:rule1 not applied twice}
    Let $X$ be an EFX partial allocation where all type $B$ items are allocated.
    Assume that \Cref{alg:efx} applies Rule 2 to $X$ to create a new allocation $X'$, and then applies Rule 1 to $X'$ to create $X''$.
    Then, the EFX condition of Rule 1 does not hold for $X''$.
\end{lemma}
\begin{proof}
    Let $X_i = (\alpha_i, \beta_i)$ for all $i \in N$. 
    Then, $X''_i = (\alpha_i, \beta_i)$ or $X''_i = (\alpha_i+1, \beta_i)$ for each $i \in N_A$ and $X''_j = (\alpha_j+1, \beta_j)$ for all $j \in N_B$. 
    
    Assume for contradiction that the EFX condition of Rule 1 holds for $X''$. 
    Let $Y''$ be the allocation formed if Rule 1 was applied to $X''$.
    Then, $Y''_i = X''_i$ for all $i \in N_A$ and $Y''_j = (\alpha_j+2, \beta_j)$ for all $j \in N_B$.
    Since the EFX condition of Rule 1 holds for $X''$, we know that $Y''$ is EFX.
   
    Let $Y$ be the allocation formed if Rule 1 was applied to $X$.
    Then, $Y_i = X_i$ for all $i \in N_A$ and $Y_j = (\alpha_j+1, \beta_j)$ for all $j \in N_B$.
    We will show that $Y$ is EFX.
    Firstly, no agent $i \in N_A$ has any EFX-envy because $X$ is EFX and $Y_i = X_i$ for all $i \in N_A$. 
    Now, consider an agent $j \in N_B$. 
    Agent $j$ does not EFX-envy any other agent $j' \in N_B$ because $X$ is EFX and both agents $j$ and $j'$ gained a single type $A$ item when comparing $Y$ to $X$.
    We now consider envy from $j$ towards an agent $i \in N_A$. 
    We consider two cases:
    \begin{itemize}
      \item First, assume that $\beta_j > 0$. Then, because $Y''$ is EFX we know that $v_j((\alpha_j+2, \beta_j-1)) \geq v_j(Y''_i) \geq v_j((\alpha_i+1, \beta_i))$ for all $i \in N_A$. Therefore, $v_j((\alpha_j+1, \beta_j-1)) \geq v_j((\alpha_i, \beta_i)) = v_j(Y_i)$ and so agent $j$ does not EFX-envy agent $i$ for all $i \in N_A$. 
      \item Now, assume that $\beta_j = 0$. Then, because $Y''$ is EFX we know that $v_j((\alpha_j+1, 0)) \geq v_j(Y''_i) \geq v_j((\alpha_i+1, \beta_i))$ for all $i \in N_A$. Therefore, $v_j((\alpha_j, 0)) \geq v_j((\alpha_i, \beta_i)) = v_j(Y_i)$ and so agent $j$ does not EFX-envy agent $i$ for all $i \in N_A$. 
    \end{itemize}
     Hence $Y$ is EFX.
     However, this is a contradiction because \Cref{alg:efx} would have applied Rule 1 to $X$ instead of Rule 2. 
\end{proof}

We are now ready to state our first set of sufficient conditions for the initial allocation $X^*$.

\begin{lemma}
    \label{using the update rules}
    Let $X^*$ be an EFX partial allocation where all type $B$ items are allocated.
    If $X^*$ satisfies the following conditions, then the update rules can be applied until all items are allocated:
    \begin{enumerate}
        \item The EFX condition of Rule 1 does not hold for $X^*$, and
        \item Consider a partial allocation $Y$ formed by applying the update rules 0 or more times to $X^*$.
        Whenever the EFX condition of Rule 1 does not hold for $Y$, Rule 2 can be applied $|N_B|$ times to $Y$.
    \end{enumerate}
\end{lemma}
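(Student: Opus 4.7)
My approach is induction on the number of rule applications, showing that at each iteration the while loop in \Cref{alg:efx} can apply either Rule 1 or Rule 2 until $X$ becomes a complete allocation. The invariant I would maintain throughout is that the current allocation $X$ is EFX: Rule 1 preserves EFX by its applicability condition, and Rule 2 preserves EFX because adding a single type-$A$ chore $h$ to $X_i$ can only introduce envy at $i$, which vanishes upon removing $h$ from $X_i$.

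For the inductive step, consider a reachable partial allocation $Y$ with $a > 0$ unallocated type-$A$ items, and split into three subcases. If the EFX condition of Rule 1 fails at $Y$, then Condition 2 directly supplies applicability of Rule 2 (the ``$|N_B|$ times'' clause implies at least one successful Rule 2 step). If the EFX condition holds and $a \geq |N_B|$, then Rule 1 applies by construction. The base case $Y = X^*$ falls under the first subcase via Condition 1, and polynomial running time is immediate, since each rule application allocates at least one type-$A$ item and so the while loop executes at most $|A|$ times.

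The main obstacle is the remaining subcase in which the EFX condition of Rule 1 holds at $Y$ but $a < |N_B|$; neither Rule 1 nor Condition 2 directly yields progress. Here I would need to exhibit an agent $i \in N_A$ satisfying $v_i(Y_i) \geq v_i(Y_j)$ for all $j \in N$, enabling Rule 2 via \Cref{efx alg rule 2 agent}. My strategy would be a proof by contradiction: assume every $i \in N_A$ envies at least one other agent in $Y$, and combine the EFX property of $Y$ with the EFX property of the hypothetical $Y'$ (in which each $j \in N_B$ receives one extra type-$A$ chore), invoking \Cref{no mutual envy} and \Cref{envy means less items} to compare bundle sizes of $N_A$-agents and $N_B$-agents. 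The goal would be a counting contradiction with the assumption $a < |N_B|$, essentially arguing that to sustain universal envy across $N_A$ while maintaining both EFX conditions, one would need strictly more than $a$ unallocated type-$A$ chores remaining.
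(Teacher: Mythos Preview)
Your inductive setup and the handling of the first two subcases match the paper. The gap is in your third subcase, where the EFX condition of Rule~1 holds at $Y$ but $a < |N_B|$.

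Your proposed strategy there does not go through. You plan to invoke \Cref{envy means less items}, but that lemma yields a bundle-size inequality only when some $j \in N_B$ \emph{does} EFX-envy some $i \in N_A$; in your subcase both $Y$ and the hypothetical $Y'$ are EFX, so no such EFX-envy is available and the lemma is silent. The ``counting contradiction with $a < |N_B|$'' is therefore unsupported: the quantity $a$ appears in neither EFX constraint, so it is unclear what you would count. More fundamentally, your argument for this subcase makes no use of Condition~2, so you are implicitly attempting the stronger statement that \emph{any} EFX partial allocation $Y$ with $Y'$ also EFX admits an envy-free agent in $N_A$ --- which is not what the lemma asserts.

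The paper handles this subcase by a completely different ``look-back'' argument that exploits the full strength of the ``$|N_B|$ times'' clause in Condition~2 (which you invoke only as ``at least once''). It first establishes an auxiliary claim: immediately after any application of Rule~1, the EFX condition of Rule~1 must fail again. This is proved by observing that, by Condition~1, Rule~2 necessarily preceded that Rule~1 step, and that if the EFX condition survived Rule~2 followed by Rule~1 then it would already have held before Rule~2 --- contradicting that Rule~2, rather than Rule~1, was chosen there. With this claim in hand, whenever $a_t < |N_B|$ one looks back to the last step $t'$ with $a_{t'} \geq |N_B|$: whether that step was Rule~1 (auxiliary claim) or Rule~2 (Rule~1 was declined despite $a_{t'} \geq |N_B|$), the EFX condition of Rule~1 fails at the resulting allocation, so Condition~2 supplies $|N_B|$ consecutive Rule~2 applications from that point --- enough to allocate every remaining item, since at most $|N_B|$ remain. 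Your proposal misses both the auxiliary claim and the essential role of the quantifier ``$|N_B|$ times''.
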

\begin{proof}
    For brevity, within this proof we refer to the two conditions of the lemma as Condition 1 and Condition 2 respectively.
    We must show that Rule 2 can be applied whenever Rule 1 cannot be applied.
    
    Let $a_t$ be the number of unallocated type $A$ items after the update rules have been applied $t$ times to the allocation $X^*$.
    We begin by considering the case where $a_0 \leq |N_B|$. By Condition 1, we know that the EFX condition of Rule 1 does not hold for $X^*$. Hence, because of Condition 2 we can apply Rule 2 $a_0$ times to $X^*$, and so the lemma holds in this case.
    
    Otherwise, assume $a_0 > |N_B|$.
    We begin by showing that immediately after Rule 1 is applied, the EFX condition of Rule 1 will no longer hold.
    This follows from \Cref{lem:rule1 not applied twice}.
    In particular, from Condition 1 we know that Rule 2 will be the first rule applied to $X^*$.
    Thus, prior to every application of Rule 1 there must have been an application of Rule 2. 
    Hence, immediately after Rule 1 is applied, the EFX condition of Rule 1 will no longer hold.
     
    Now, we show that Rule 2 can be applied whenever Rule 1 cannot.
    Consider a situation where the update rules have been applied $t$ times, and Rule 1 cannot be applied.
    If $a_t \geq |N_B|$, then the EFX condition of Rule 1 must not hold and so Rule 2 can be applied because of Condition 2.
    If $a_t < |N_B|$, then consider the last update rule applied when $a_{t'} \geq |N_B|$ still held (that is, consider the largest $t'$ such that $a_{t'} \geq |N_B|$, and consider the next update rule applied).
    If it was Rule 1, then after this update the EFX condition of Rule 1 did not hold (as we showed in the previous paragraph) and so Rule 2 can be applied until every item is allocated. 
    If it was Rule 2, then the EFX condition of Rule 1 did not hold and so by Condition 2, Rule 2 can be applied until every item is allocated. 
\end{proof}

We now present \Cref{can apply rule 2 many times}, that gives a set of conditions under which the second condition of \Cref{using the update rules} is satisfied. 

\begin{lemma}
    \label{can apply rule 2 many times}
    Let $X$ be an EFX partial allocation.
    If $X$ satisfies the following conditions, then Rule 2 can be applied $|N_B|$ times:
    \begin{enumerate}
        \item For all agents $i \in N_A$ and $j \in N_B$, $X_j$ has strictly more type $B$ items than $X_i$,
        \item For all agents $i \in N_A$ and $j \in N_B$, $i$ does not envy $j$, and
        \item Consider a partial allocation $Y$ formed by applying the update rules 0 or more times to $X$.
        For any such allocation $Y$ and any nonempty subset $S \subseteq N_A$, there exists some agent $i \in S$ who does not envy any other agent in $S$.
    \end{enumerate}
\end{lemma}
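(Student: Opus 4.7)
The plan is to induct on the number of times Rule~2 has been applied. Let $Y^k$ denote the partial allocation obtained after $k$ applications of Rule~2 starting from $X$, with $Y^0 = X$, and let $Y^k_i$ denote agent $i$'s bundle in $Y^k$. I want to show that for every $0 \leq k \leq |N_B|-1$, there exists some $i \in N_A$ satisfying $v_i(Y^k_i) \geq v_i(Y^k_l)$ for every $l \in N$, which is exactly the condition for Rule~2 to fire again. The structural observation underpinning the induction is that Rule~2 modifies only a single $N_A$ agent's bundle per step and never touches any $N_B$ bundle, so $Y^k_j = X_j$ for every $j \in N_B$ and every $k$.

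First I isolate the ``safe set'' $T_k := \{i \in N_A : v_i(Y^k_i) \geq v_i(X_j) \text{ for every } j \in N_B\}$ of agents who envy no agent in $N_B$ under $Y^k$. By condition~2 we have $T_0 = N_A$, and since each Rule~2 step changes exactly one $N_A$ bundle while leaving all $N_B$ bundles fixed, at most one agent can drop out of $T$ per step, giving $|T_k| \geq |N_A| - k \geq |N_A| - |N_B| + 1 \geq 1$ (using the running assumption $|N_A| \geq |N_B|$). Applying condition~3 to the nonempty subset $T_k \subseteq N_A$ produces $i^* \in T_k$ that does not envy any other agent in $T_k$; by $i^* \in T_k$ it does not envy any $j \in N_B$ either.

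The crux is showing that $i^*$ does not envy any $i' \in N_A \setminus T_k$. Suppose for contradiction that it does; then $i'$ envies some $j \in N_B$. Write $Y^k_{i'} = (\alpha'_{i'}, \beta_{i'})$ and $X_j = (\alpha_j, \beta_j)$, and set $x := \alpha'_{i'} - \alpha_j$ and $y := \beta_j - \beta_{i'}$, so $y \geq 1$ by condition~1. Combining $i^* \in T_k$ with $i^*$ envying $i'$ gives $x v_{i^*}^A > y v_{i^*}^B$; the envy $i' \to j$ gives $x v_{i'}^A < y v_{i'}^B$; and EFX of $Y^k$ applied to a type-$A$ item in $Y^k_{i'}$ gives $(x-1) v_{i'}^A \geq y v_{i'}^B$. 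A sign analysis on $x$ handles each case: $x \leq 0$ makes the two envy inequalities sign-incompatible (one side becomes nonnegative while the other is strictly negative); $x = 1$ reduces $i'$'s envy to $|v_{i'}^A| > y|v_{i'}^B|$, yielding $y < |v_{i'}^A|/|v_{i'}^B| \leq 1$ (since $i' \in N_A$), contradicting $y \geq 1$; the $x \geq 2$ case combines the EFX bound with the envy inequalities and the sorted ordering of agents by $v^A/v^B$ (possibly invoking \Cref{no mutual envy} on suitable pairs) to reach a contradiction.

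Putting the last two steps together, $i^*$ envies no agent in $N$, so Rule~2 may be applied for the $(k+1)$-th time. This completes the induction and hence the proof. The main obstacle I expect is the $x \geq 2$ sub-case of the contradiction argument: the other sub-cases reduce to immediate sign manipulations, but $x \geq 2$ requires a careful combination of the EFX inequality, the two envy inequalities, and the ordering on $v^A/v^B$-ratios to rule it out.
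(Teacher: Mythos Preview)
Your argument has a real gap in the $x\geq 2$ sub-case, which you yourself flag as the ``main obstacle.'' The three inequalities you isolate are \emph{not} jointly contradictory there: take $x=2$, $y=1$, $v_{i^*}^A=-1$, $v_{i^*}^B=-3$, $v_{i'}^A=-1$, $v_{i'}^B=-1.5$; then $x v_{i^*}^A=-2>-3=y v_{i^*}^B$, $x v_{i'}^A=-2<-1.5=y v_{i'}^B$, and $(x-1)v_{i'}^A=-1\geq -1.5=y v_{i'}^B$, with both agents in $N_A$. Nothing in your setup pins down the relative order of $i^*$ and $i'$, so \Cref{no mutual envy} does not obviously apply, and ``a careful combination'' is not a proof. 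You would at minimum need to bring in the EFX relation between $i^*$ and $i'$ and the actual shape of $Y^k_{i^*}$, and it is not clear the route closes even then.

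The paper avoids this case analysis entirely by tracking a \emph{different} set: rather than your $T_k=\{i\in N_A: i\text{ does not envy }N_B\}$, it takes $T=\{i\in N_A: i\text{ has not yet received a Rule~2 item}\}$. Any $t\in T$ then has an \emph{unchanged} bundle, so condition~2 immediately gives that $t$ does not envy $N_B$. If the chosen $t\in T$ envies some $i\in N_A\setminus T$, then comparing the EFX relation of $t$ towards $i$'s \emph{previous} bundle (one fewer type-$A$ item, when the allocation was still EFX and $t$'s bundle was the same) with $t$'s current envy towards $i$ forces $X_t$ to consist of type-$B$ items only. Combined with condition~1 this yields $X_t\subsetneq X_j$ for every $j\in N_B$, whence any $N_A$-agent envying some $j\in N_B$ would EFX-envy $t$, contradicting EFX; so in fact no $N_A$-agent envies $N_B$, and condition~3 with $S=N_A$ finishes. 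The ``bundle unchanged'' leverage is precisely what your choice of $T_k$ discards, and is what lets the paper bypass the inequality chase you are stuck on.
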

\begin{proof}
    For brevity, within this proof we refer to the conditions of the lemma as Conditions 1-3 respectively.
    Let $Y^k$ be the allocation formed by applying rule 2 $k$ times to $X$. We will show that rule 2 can be applied to all $Y^k$ where $0 \le k < |N_A|$. 
    By induction, this means that we can apply rule 2 $|N_A| \geq |N_B|$ times to $X$.
    
    Assume that $k < |N_A|$ and that $Y^k$ is an EFX allocation.
    Let $T^k \subseteq N_A$ be the set of agents in $N_A$ who have the same bundle in $X$ and $Y^k$ (that is, $X_i = Y^k_i$).
    Since $Y^k$ is formed by applying rule 2 $k$ times to $X$, it follows that $|T^k| \geq |N_A| - k$ and so $|T^k| > 0$. 
    We will show that rule 2 can be applied to $Y^k$, which will create $Y^{k+1}$.
    
    From Condition 3, there exists an agent $t \in T^k$ such that $t$ does not envy any other agent in $T^k$ when the allocation is $Y^k$.
    Since $Y^k_t = X_t$, we know from Condition 2 that $t$ does not envy any $j \in N_B$.
    Hence, we can apply Rule 2 to agent $t$ if they do not envy any agent $i \in N_A \backslash T^k$.
    
    Otherwise, assume that $t$ envies some agent $i \in N_A \backslash T^k$. 
    Let $Y^k_i = (\alpha^k_i, \beta^k_i)$, $X_i = (\alpha_i, \beta_i)$ and $X_t = Y^k_t = (\alpha_t, \beta_t)$.
    Since $i \not\in T^k$, we know that $Y^k_i \neq X_i$ (in particular, $\alpha^k_i > \alpha_i$ and $\beta^k_i = \beta_i$).
    
    \noindent\textbf{Claim 1:} $Y^k_t$ contains only type $B$ items. That is, $\alpha_t = 0$.
    \begin{proof}[Proof of Claim 1] Assume for contradiction that $\alpha_t > 0$. Since the allocation $X$ is EFX, we know that $v_t((\alpha_t-1, \beta_t)) \geq v_t((\alpha_i, \beta_i))$ and so $v_t((\alpha_t, \beta_t)) \geq v_t((\alpha_i+1, \beta_i))$.
    Additionally, since $\alpha^k_i > \alpha_i$ and $\beta^k_i = \beta_i$, it follows that $v_t((\alpha_i+1, \beta_i)) \geq v_t((\alpha^k_i, \beta^k_i))$ and so $v_t(Y^k_t) = v_t((\alpha_t, \beta_t)) \geq v_t((\alpha^k_i, \beta^k_i)) = v_t(Y^k_i)$.
    However, by the choice of $i$ we know that $t$ envies $i$ when the allocation is $Y^k$, which implies that $v_t(Y^k_t) < v_t(Y^k_i)$, a contradiction.
\end{proof}    
    By Condition 3, there exists some agent $i' \in N_A$ who does not envy any other agent in $N_A$.
    From Claim 1 and Condition 1, we know that $Y^k_t \subset Y^k_j$ for all $j \in N_B$. 
    It follows that $i'$ must not envy any agent $j \in N_B$, since $j$ has a strictly worse bundle that $t$ and $i'$ does not envy $t$.
    Hence $i'$ does not envy any agent in $N$, and so Rule 2 can be applied to $i'$ to create the EFX allocation $Y^{k+1}$. 
\end{proof}
Finally, we provide a result which gives an alternate set of conditions for the initial allocation $X^*$.

\begin{lemma}
    \label{N_B same size}
    Let $X^*$ be an EFX partial allocation where all type $B$ items are allocated.
    If $X^*$ satisfies the following conditions, then the update rules can be applied until all items are allocated:
    \begin{enumerate}
        \item The EFX condition of Rule 1 does not hold for $X^*$,
        \item $|X^*_i| = |X^*_{i'}|$ for all $i, i' \in N_B$,
        \item For all agents $i \in N_A$ and $j \in N_B$, $X^*_j$ has strictly more type $B$ items than $X^*_i$, and
        \item Consider a partial allocation $Y$ formed by applying the update rules 0 or more times to $X^*$.
        For any such allocation $Y$ and any nonempty subset $S \subseteq N_A$, there exists some agent $i \in S$ who does not envy any other agent in $S$.
    \end{enumerate}
\end{lemma}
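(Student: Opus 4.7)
The plan is to reduce to \Cref{using the update rules} by verifying its two preconditions. Condition~1 is given by condition~1 of the present lemma, so the task is its condition~2: for every partial allocation $Y$ reachable from $X^*$ under which the EFX condition of Rule~1 fails, Rule~2 can be applied $|N_B|$ consecutive times to $Y$.

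First I would observe that conditions~2 and~3 of the present lemma are invariants of the update process: Rule~1 adds a type-$A$ item uniformly to every $j \in N_B$ and Rule~2 never touches $N_B$, so $|Y_j|$ stays constant across $N_B$; and because neither rule ever moves a type-$B$ item, $\beta_j > \beta_i$ is preserved for every $i \in N_A$ and $j \in N_B$. I would then prove an auxiliary claim: in any reachable $Y$, if $i \in N_A$ has no type-$A$ item, then $i$ does not envy any $j \in N_B$. Writing $Y_i = (0, \beta_i)$ and $Y_j = (\alpha_j, \beta_j)$, the quantity $v_i(Y_j) - v_i(Y_i) = \alpha_j v_i^A + (\beta_j - \beta_i) v_i^B$ is the sum of a non-positive and a strictly negative term (using condition~3 and $v_i^B < 0$), hence strictly negative.

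I would then invoke \Cref{can apply rule 2 many times}. Its condition~1 is exactly condition~3 of the present lemma (preserved in every reachable $Y$), and its condition~3 is delivered by condition~4 of the present lemma (applied to allocations reachable from $Y$, which are also reachable from $X^*$). What remains is its condition~2---that no $i \in N_A$ envies any $j \in N_B$ in $Y$---which is the technical heart of the proof. I plan to establish it by contradiction. The failure of Rule~1's EFX condition in $Y$ forces some $k \in N_B$ to strictly envy some $k' \in N_A$ in $Y$; applying EFX of $Y$ by removing a type-$A$ item from $Y_k$ yields $\alpha_k \le \alpha_{k'}$, and combined with the envy inequality and $|v_k^B| \le |v_k^A|$ one obtains $|Y_{k'}| < |Y_k|$. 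Simultaneously, if some $i \in N_A$ envied some $j \in N_B$, the auxiliary claim forces $Y_i$ to contain a type-$A$ item, and the analogous EFX bound yields $|Y_i| > |Y_j|$. By condition~2, $|Y_j| = |Y_k|$, so $|Y_{k'}| < |Y_i|$; chasing the envy relations among these four agents (using \Cref{no mutual envy} when the pairs $(k, k')$ and $(i, j)$ overlap as $(j, i)$, and condition~4 applied to suitable subsets of $N_A$ when they do not) forces an infeasible chain of envies.

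The main obstacle is this contradiction argument, particularly when $(i, j)$ and $(k, k')$ involve four distinct agents; threading the size inequalities through the equal-bundle-size condition within $N_B$ and the envy-acyclicity on $N_A$ supplied by condition~4 is delicate but, I expect, feasible via a careful case analysis on who among $\{i, j, k, k'\}$ coincides.
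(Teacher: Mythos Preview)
Your overall strategy---reduce to \Cref{using the update rules} and verify its second condition via \Cref{can apply rule 2 many times}---is exactly the paper's. You also correctly note that conditions~2 and~3 of the present lemma are invariants of the update rules, and that conditions~1 and~3 of \Cref{can apply rule 2 many times} follow at once. The divergence is in how you handle condition~2 of \Cref{can apply rule 2 many times}.

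First, a small but real slip: the failure of Rule~1's EFX condition is a statement about the allocation $X'$ obtained by adding one type-$A$ item to each $N_B$ bundle, not about $Y$ itself. It does not say that some $k\in N_B$ envies some $k'\in N_A$ \emph{in $Y$}; it says $k$ would EFX-envy $k'$ in $X'$. The conclusion $|Y_{k'}|<|Y_k|$ you want is correct, but it comes directly from \Cref{envy means less items} applied to $X'$ (where $|X'_{k'}|<|X'_k|-1$, i.e., $|Y_{k'}|<|Y_k|$), not from ``EFX of $Y$ by removing a type-$A$ item from $Y_k$''---indeed $Y_k$ need not contain a type-$A$ item.

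More importantly, your proposed contradiction via ``chasing envy relations'' and a case analysis over coincidences among $\{i,j,k,k'\}$ is unnecessary, and you yourself flag it as the main obstacle. The paper closes the argument directly in two lines, without contradiction. Once you have the single witness $k'\in N_A$ with $|Y_{k'}|<|Y_k|=|Y_{j'}|$ for every $j'\in N_B$, you compare an arbitrary $i'\in N_A$ to an arbitrary $j'\in N_B$ \emph{through} $k'$: EFX of $Y$ gives $v_{i'}(Y_{i'})-v_{i'}^B \ge v_{i'}(Y_{k'})$ (removing the least-preferred chore of $i'$ suffices, and $-v_{i'}^B\ge -v_{i'}^A$), while $|Y_{k'}|<|Y_{j'}|$ together with $\beta_{k'}<\beta_{j'}$ yields $v_{i'}(Y_{k'})+v_{i'}^B \ge v_{i'}(Y_{j'})$ (replacing items of $Y_{k'}$ by the larger bundle $Y_{j'}$, which has at least one extra type-$B$ item, only hurts $i'$). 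Chaining gives $v_{i'}(Y_{i'})\ge v_{i'}(Y_{j'})$, i.e., $i'$ does not envy $j'$. No case split, no appeal to condition~4 or \Cref{no mutual envy}, is needed here. Your auxiliary claim and the inequality $|Y_i|>|Y_j|$ are true but superfluous for this step.
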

\begin{proof}
    For brevity, within this proof we refer to the conditions of the lemma as Conditions 1-4 respectively.
    We use \Cref{using the update rules}.
    The first condition of \Cref{using the update rules} is the same as Condition 1, and so we just need to show that the second condition of \Cref{using the update rules} is met. 
    
    Consider a partial allocation $Y$ formed by applying the update rules 0 or more times to $X^*$, and assume that the EFX condition of Rule 1 does not hold for $Y$. 
    We use \Cref{can apply rule 2 many times} to show that Rule 2 can be applied $|N_B|$ times to $Y$.
    The first and third conditions of \Cref{can apply rule 2 many times} immediately hold because they are shared with \Cref{N_B same size}.
    Hence, we just need to show that the second condition of \Cref{can apply rule 2 many times} holds.
    
    Let $i \in N_A$ and $j \in N_B$ be agents such that $j$ would EFX-envy $i$ if Rule 1 was applied to $Y$ (note that these agents must exist because the EFX condition of Rule 1 does not hold for $Y$). 
    By \Cref{envy means less items}, $|Y_i| < |Y_j| = |Y_{j'}|$ for all $j' \in N_B$.
    Hence, by \Cref{lem:less items implies no envy}, we know that for all agents $i' \in N_A$ and $j' \in N_B$, agent $i'$ does not envy agent $j'$.
    Thus \Cref{can apply rule 2 many times} holds for $Y$ and so \Cref{using the update rules} holds for $X^*$.
\end{proof}

\subsection{Computing \texorpdfstring{$X^*$}{}}\label{subsection: deciding the initial allocation}

In this section, we describe how to compute $X^*$ and justify how this initial allocation is sufficient for \Cref{alg:efx} to output an EFX allocation.
We consider several cases, depending on the input instance.

Let $a$ and $b$ be the number of unallocated type $A$ and $B$ items respectively. 
Initially, $a = |A|$ and $b = |B|$.

Let $k = \left\lfloor \frac{b-|N_B|}{n} \right\rfloor$.
We begin by assigning $k$ type $B$ items to all agents in $N_A$ and $k+1$ to all agents in $N_B$.
In particular,
\begin{equation*}
    X^*_i =
    \begin{cases}
      (0, k) & \text{for $i \in N_A$,} \\
      (0, k+1) & \text{for $i \in N_B$.} \\
    \end{cases}
\end{equation*}

Now, $0 \leq b < n$.
We consider two cases, depending on $b$.

\subsubsection{Case 1: \texorpdfstring{$b \geq |N_B|$}{}}
\label{case 1: b leq |N_B|}
Let $N_A' = \{ \,  i \in N_A : i > n-b \,  \}$.
Note that $|N_A'| = b - |N_B|$.
We allocate one more type $B$ item to all agents in $N_B \cup N_A'$, so that $b = 0$. We also allocate one type $A$ item to all agents in $N_A \backslash N_A'$.
In particular, the partial allocation is:

\begin{equation*}
    X^*_i =
    \begin{cases}
      (1, k) & \text{for $i \in N_A \backslash N_A'$,} \\
      (0, k+1) & \text{for $i \in N_A'$,} \\
      (0, k+2) & \text{for $i \in N_B$.} \\
    \end{cases}
\end{equation*}

We use \Cref{N_B same size} to show that the update rules can be applied until all items are allocated. 
First, note that the partial allocation is EFX and the first three conditions of \Cref{N_B same size} clearly hold. 
For the fourth condition, consider a partial allocation $Y$ as described in \Cref{N_B same size}, and some nonempty subset $S \subseteq N_A$.
If $S \subseteq N_A'$ or $S \subseteq N_A \setminus N_A'$, then the fourth condition holds as any agent $i \in \argmin_{j \in S}|Y_j|$ does not envy any other agents in $S$.
Otherwise, let $i$ be an agent in $\argmin_{j \in S \cap N_A'}|Y_j|$ and $i'$ be an agent in $\argmin_{j \in S \cap (N_A \setminus N_A')}|Y_j|$.
By \Cref{no mutual envy} these agents cannot both envy each other, and so assume without loss of generality that $i$ does not envy $i'$.
Then, $i$ does not envy any agents in $S$.
Hence this allocation satisfies all the conditions of \Cref{N_B same size}.

\subsubsection{Case 2: \texorpdfstring{$b < |N_B|$}{}}
Let $N_B' = \{ \, i \in N_B : i > n-b \, \}$.
Note that $|N_B'| = b$.
We assign one more type $B$ item to all agents in $N_B'$, so that $b = 0$. 
This gives us the following partial allocation that is not EFX:

\begin{equation*}
    X^*_i =
    \begin{cases}
      (0, k) & \text{for $i \in N_A$,} \\
      (0, k+1) & \text{for $i \in N_B \backslash N_B'$,} \\
      (0, k+2) & \text{for $i \in N_B'$.} \\
    \end{cases}
\end{equation*}

If $|A| \leq 2|N_A|$, then we allocate the type $A$ items to agents in $N_A$ in a round-robin way. 
Note that each agent in $N_A$ will receive 1 or 2 type $A$ items (since $|N_A| < |A| \leq 2|N_A|$).
In particular, let $N_A'$ be the agents who receive 1 type $A$ item.
Then we will have the following EFX allocation:
\begin{equation*}
    X^*_i =
    \begin{cases}
      (1, k) & \text{for $i \in N_A'$,} \\
      (2, k) & \text{for $i \in N_A \backslash N_A'$,} \\
      (0, k+1) & \text{for $i \in N_B \backslash N_B'$,} \\
      (0, k+2) & \text{for $i \in N_B'$.} \\
    \end{cases}
\end{equation*}
Since there are no unallocated items, this case is complete.
Otherwise, we know that $|A| > 2|N_A|$. We consider three final subcases.

\paragraph{Case 2.1: For all $j \in N_B \backslash N_B'$, agent $j$ does not strongly prefer $B$ (recall that $j$ strongly prefers $B$ if $2v_j^B \geq v_j^A$)}

In this case, we allocate one type $A$ item to all agents in $N_A \cup N_B \backslash N_B'$, resulting in the following EFX partial allocation:

\begin{equation*}
    X^*_i =
    \begin{cases}
      (1, k) & \text{for $i \in N_A$,} \\
      (1, k+1) & \text{for $i \in N_B \backslash N_B'$,} \\
      (0, k+2) & \text{for $i \in N_B'$.} \\
    \end{cases}
\end{equation*}

This partial allocation is EFX because agents in $N_B \backslash N_B'$ prefer 1 type $A$ item over 2 type $B$ items. 
We use \Cref{N_B same size} to show that the update rules can be applied until all items are allocated. 
The first three conditions of \Cref{N_B same size} clearly hold. 
For the fourth condition, consider a partial allocation $Y$ as described in \Cref{N_B same size} and a nonempty subset $S \subseteq N_A$.
Then, any agent $i \in \argmin_{j \in S}|Y_j|$ does not envy any other agents in $S$.
Hence this allocation satisfies all the conditions of \Cref{N_B same size}.

\paragraph{Case 2.2: There are at least $|N_B|$ agents $i \in N_A$ who strongly prefer $A$}

In this case, we give one type $A$ item to all agents in $N_A$, resulting in the following EFX partial allocation:

\begin{equation*}
    X^*_i =
    \begin{cases}
      (1, k) & \text{for $i \in N_A$,} \\
      (0, k+1) & \text{for $i \in N_B \backslash N_B'$,} \\
      (0, k+2) & \text{for $i \in N_B'$.} \\
    \end{cases}
\end{equation*}

We use \Cref{using the update rules} to show that the update rules can always be applied.
The first condition clearly holds.
For the second condition, consider a partial allocation $Y$ as described in \Cref{using the update rules}, and assume that the EFX condition of Rule 1 does not hold for $Y$. 
Then, there must exist some agent $j \in N_B$ who would EFX-envy some agent $i \in N_A$ if Rule 1 were to be applied. 
Then, by \Cref{envy means less items}, we know that $|Y_i| < |Y_{j}|$.
However, observe that $|Y_j| \leq |Y_{j'}|+1$ for all $j' \in N_B$ and so $|Y_i| < |Y_{j}| \leq |Y_{j'}|+1$ implying that $|Y_i| \leq |Y_{j'}|$ for all $j' \in N_B$.
Additionally, since all agents in $N_A$ have the same number of type $B$ items and $Y$ is EFX, it follows that $|Y_{i'}| \leq |Y_{i}|+1 \leq |Y_{j'}|+1$ for all $i' \in N_A$ and $j' \in N_B$.

We can therefore apply Rule 2 at least $|N_B|$ times to $Y$ as follows:
\begin{itemize}
    \item While there exists an agent $i \in N_A$ where $|Y_i| \leq |Y_j|$ for all $j \in N_B$, apply Rule 2 to such an agent with the smallest $|Y_i|$. This maintains EFX as $i$ did not envy any agent prior to the rule being applied.
    \item After doing the above step one or more times, all agents $i \in N_A$ have identical bundles (with $|Y_{i}| \leq |Y_j|+1$ for all $j \in N_B$). We can apply Rule 2 once to all agents who strongly prefer $A$. This maintains EFX as these agents will not EFX-envy any $j \in N_B$ because they prefer two type $A$ items over a type $B$ item.
\end{itemize}

\paragraph{Case 2.3: Cases 2.1 and 2.2 do not hold}
Since Case 2.2 does not hold, there are less than $|N_B|$ agents $i \in N_A$ who strongly prefer $A$.
Since Case 2.1 does not hold, there exists some agent $j \in N_B \backslash N_B'$ who strongly prefers $B$ and so all agents $j' \in N_B'$ must strongly prefer $B$.

Let $N_A' = \{ \, i \in N_A : i \leq |N_B| \,  \}$.
Note that $|N_A'| = |N_B|$.
We transfer one type $B$ item from each agent in $N_A'$ to the agents in $N_B$, allocate 2 type $A$ items to all agents in $N_A'$ and allocate 1 type $A$ item to all agents in $N_A \backslash N_A'$.
In particular,

\begin{equation*}
    X^*_i =
    \begin{cases}
      (2, k-1) & \text{for $i \in N_A'$,} \\
      (1, k) & \text{for $i \in N_A \backslash N_A'$,} \\
      (0, k+2) & \text{for $i \in N_B \backslash N_B'$,} \\
      (0, k+3) & \text{for $i \in N_B'$.} \\
    \end{cases}
\end{equation*}
Since there are less than $|N_B|$ agents $i \in N_A$ who strongly prefer $A$, all these agents must be in $N_A'$ and so no agent in $N_A \backslash N_A'$ strongly prefers $A$.
Thus, there is no EFX-envy from any agent $i \in N_A$ towards any other agent in $N$. 
Additionally, since all agents in $N_B'$ strongly prefer $B$ there is no EFX-envy from any agent $j \in N_B$ towards any other agent in $N$.
Therefore, $X^*$ is EFX.

We use \Cref{using the update rules} to show that this initial allocation is sufficient for \Cref{alg:efx}.
The first condition of  \Cref{using the update rules} clearly holds.
For the second condition, consider a partial allocation $Y$ as described in \Cref{using the update rules}, and assume that the EFX condition of Rule 1 does not hold for $Y$. 
We use \Cref{can apply rule 2 many times} to show that the second condition of \Cref{using the update rules} holds.
\begin{enumerate}
    \item The first condition of \Cref{can apply rule 2 many times} holds for $Y$ because it holds for $X^*$.
    \item For the second condition of \Cref{can apply rule 2 many times}, note that the EFX condition of $Y$ does not hold by the definition of $Y$. Hence, there exists some $i \in N_A$ and $j \in N_B$ such that $j$ would EFX-envy $i$ if Rule 1 was applied. 
    By \Cref{envy means less items}, $|Y_i| < |Y_j|$.

    If $j \in N_B \backslash N_B'$, then $|Y_i| < |Y_{j'}|$ for all $j' \in N_B$. Therefore, by \Cref{lem:less items implies no envy} we know that for all agents $i' \in N_A$ and $j' \in N_B$, agent $i'$ does not envy agent $j'$.
    
    If $j \in N_B'$, then we show that $|Y_i| < |Y_j|-1$, by proving that $|Y_i| \neq |Y_j|-1$.
    Let $Y_j = (\alpha_j, k+3)$ and assume $|Y_i| = |Y_j|-1$. 
    Then,
    \begin{equation*}
        Y_i =
        \begin{cases}
          (\alpha_j+3, k-1) & \text{if $i \in N_A'$,} \\
          (\alpha_j+2, k) & \text{if $i \in N_A \backslash N_A'$.} \\
        \end{cases}
    \end{equation*}
    If Rule 1 was applied, $Y_j$ would be $(\alpha_j+1, k+3)$.
    However, if this occurred, $j$ would not EFX-envy $i$ in either case (since $j$ strongly prefers $B$) and so $|Y_i| < |Y_j|-1$.
    This implies that $|Y_i| < |Y_{j'}|$ for all $j' \in N_B$ and so we can apply \Cref{lem:less items implies no envy}.    
    \item For the third condition of \Cref{can apply rule 2 many times}, we can use the same argument that is used in \Cref{case 1: b leq |N_B|}.
\end{enumerate}

This completes our proof of \Cref{thm: efx}.

\section{Algorithm for Checking Existence of EF Allocations}

For negative additive valuations, checking whether an envy-free allocation exists is NP-complete \citep{BSV21a}. Under our scenario of two chore types, we propose a polynomial-time algorithm to solve the problem. In particular, we prove the following result.

\begin{theorem}
  \label{thm: finding EF in polynomial time}
  For two chore type instances, an envy-free allocation can be found in polynomial-time (with respect to the number of agents and items) whenever one exists.
\end{theorem}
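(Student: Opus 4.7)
The plan is to observe that an EF allocation, if one exists, can always be taken in a canonical \emph{sorted} form, to reduce global EF in this form to $O(n)$ consecutive-pair inequalities, and then to decide feasibility by a straightforward dynamic program.

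\textbf{Canonical sorted form.} Writing $X_i=(\alpha_i,\beta_i)$ and dividing the EF condition of agent $i$ with respect to $j$ by the negative number $v_i^B$, one obtains $r_i(\alpha_i-\alpha_j) + (\beta_i-\beta_j) \le 0$, where $r_i := v_i^A/v_i^B$. Combining this with the symmetric condition for $j$ yields $r_j(\alpha_j-\alpha_i) \le \beta_i-\beta_j \le r_i(\alpha_j-\alpha_i)$, which for $i<j$ with $r_i<r_j$ forces $\alpha_i\ge\alpha_j$ and $\beta_i\le\beta_j$. When $r_i=r_{i+1}$, agents $i$ and $i{+}1$ have identical preferences and permuting their bundles preserves EF. Hence there is an EF allocation iff there is a sorted EF allocation in which $\alpha_1\ge\alpha_2\ge\cdots\ge\alpha_n$ and $\beta_1\le\beta_2\le\cdots\le\beta_n$.

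\textbf{Consecutive-pair reduction.} In a sorted allocation the EF conditions between consecutive agents simplify to
\[
r_i(\alpha_i-\alpha_{i+1}) \;\le\; \beta_{i+1}-\beta_i \;\le\; r_{i+1}(\alpha_i-\alpha_{i+1}), \qquad 1\le i<n.
\]
I claim that these $O(n)$ local inequalities already imply EF between every pair $i<j$. Indeed, summing the lower bounds gives $\beta_j-\beta_i \ge \sum_{k=i}^{j-1} r_k(\alpha_k-\alpha_{k+1}) \ge r_i(\alpha_i-\alpha_j)$, where the last step uses $r_k\ge r_i$ and $\alpha_k\ge\alpha_{k+1}$; this is exactly EF of $i$ towards $j$. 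The analogous chain of upper bounds, using $r_{k+1}\le r_j$, yields EF of $j$ towards $i$.

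\textbf{Dynamic program.} Define $f(i,\alpha_i,\beta_i,A_i,B_i)$ to be true iff there is a sorted partial allocation of the first $i$ agents in which agent $i$ holds $(\alpha_i,\beta_i)$ and the totals used so far are $A_i=\sum_{k\le i}\alpha_k$, $B_i=\sum_{k\le i}\beta_k$. A transition from $(i,\alpha,\beta,A,B)$ to $(i{+}1,\alpha',\beta',A+\alpha',B+\beta')$ is allowed iff $\alpha'\le\alpha$, $\beta'\ge\beta$, and the two consecutive EF inequalities above hold. An EF allocation exists iff some state of the form $(n,\cdot,\cdot,|A|,|B|)$ is reachable, in which case it is recovered by standard backtracking. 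The state space has size $O(n\cdot|A|^2|B|^2)$ with $O(|A|\cdot|B|)$ candidate transitions per state, so the overall running time is polynomial in $n$ and $m=|A|+|B|$. The main obstacle is establishing the canonical-form lemma, which requires carefully combining both directions of EF between the same pair of agents together with the ratio ordering on $r$; once in place, the remaining steps are routine bookkeeping.
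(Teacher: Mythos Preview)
Your proposal is correct and follows essentially the same three-step approach as the paper: a canonical sorted form (the paper's Corollary via the lemma that $r_i<r_j$ forces $\alpha_i\ge\alpha_j$), a local-to-global EF lemma reducing to consecutive pairs, and a polynomial-size dynamic program indexed by the current agent, their bundle, and the remaining (equivalently, used) item counts. Your telescoping-sum proof of the consecutive-pair reduction is in fact cleaner than the paper's transitivity-by-cases argument, but the overall structure and the DP are the same.
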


Before proving this theorem in full, let us first deal with a trivial case. If $v_i^A = 0$ and $v_j^B = 0$ for some (not necessarily distinct) agents $i$ and $j$, then we can allocate all chores of types $A$ and $B$ to agents $i$ and $j$ respectively. Since all other agents are not given any chores, the resulting allocation is trivially envy-free. It suffices therefore to only consider cases where at most one chore type is valued at zero by at least one agent.

To further simplify the problem, we also do the following: if $v_i^B = 0$ for some agent $i$, swap the chore types---that is, rename them---so that $v_i^A = 0$ instead. Then, without loss of generality, we may assume $v_i^A \leq 0$ and $v_i^B < 0$. To prove \Cref{thm: finding EF in polynomial time}, we first present a result about the structure of any envy-free allocation.

\begin{lemma}
    Consider a two chore types instance where $v_i^A \leq 0$ and $v_i^B < 0$ for all $i \in N$.
    Let $X$ be an envy-free allocation and $i$ and $j$ be two agents with bundles $X_i = (\alpha_i, \beta_i)$ and $X_j = (\alpha_j, \beta_j)$. If $\frac{v_i^A}{v_i^B} < \frac{v_j^A}{v_j^B}$, then $\alpha_i \geq \alpha_j$ must hold.
\end{lemma}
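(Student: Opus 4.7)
The approach is to write down the two envy-freeness inequalities between $i$ and $j$, massage each one by isolating the quantity $\beta_j - \beta_i$, and then sandwich $\beta_j - \beta_i$ between two multiples of $\alpha_i - \alpha_j$ whose coefficients are exactly the two ratios $v_i^A/v_i^B$ and $v_j^A/v_j^B$.

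Concretely, the first step is to write ``$i$ does not envy $j$'' as
\[
(\alpha_i - \alpha_j)v_i^A \geq (\beta_j - \beta_i)v_i^B,
\]
and ``$j$ does not envy $i$'' as
\[
(\alpha_i - \alpha_j)v_j^A \leq (\beta_j - \beta_i)v_j^B.
\]
Since $v_i^B < 0$ and $v_j^B < 0$, dividing through flips the inequalities and yields
\[
(\alpha_i - \alpha_j)\frac{v_j^A}{v_j^B} \;\geq\; \beta_j - \beta_i \;\geq\; (\alpha_i - \alpha_j)\frac{v_i^A}{v_i^B}.
\]
Note that the ratios are well-defined precisely because $v^B$ is strictly negative for every agent (so the potential edge case $v^A = 0$ is harmless here).

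The final step is simply to chain the outer two expressions: $(\alpha_i - \alpha_j)\bigl(v_j^A/v_j^B - v_i^A/v_i^B\bigr) \geq 0$. By hypothesis the bracketed factor is strictly positive, so $\alpha_i - \alpha_j \geq 0$, giving the claim.

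The only technical subtlety, and really the only thing one must not get wrong, is the direction of the inequalities when dividing by the negative quantities $v_i^B$ and $v_j^B$; there is no genuine obstacle beyond that bookkeeping. No case analysis is needed, and the argument does not rely on the global ordering of agents, only on the strict ratio inequality between the two specific agents $i$ and $j$.
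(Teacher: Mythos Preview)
Your proposal is correct and follows essentially the same approach as the paper: both write the two envy-freeness inequalities, divide through by the strictly negative $v_i^B$ and $v_j^B$, and compare the resulting ratio bounds. The only cosmetic difference is that the paper frames it as a proof by contradiction (assuming $\alpha_i < \alpha_j$ and then dividing by $\alpha_i - \alpha_j$), whereas you argue directly by factoring out $\alpha_i - \alpha_j$ and reading off its sign from the strict ratio hypothesis; your direct version is arguably a bit cleaner since it avoids dividing by a quantity whose sign is only known under the contradiction assumption.
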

\begin{proof}
    Assume by contradiction that $\frac{v_i^A}{v_i^B} < \frac{v_j^A}{v_j^B}$ but $\alpha_i < \alpha_j$. Agent $i$ does not envy agent $j$, so $v_i(X_i) \geq v_i(X_j)$, i.e.,
      \[ \alpha_i v_i^A + \beta_i v_i^B \geq \alpha_j v_i^A + \beta_j v_i^B . \]
    Rearranging,
      \[
        (\alpha_i - \alpha_j) v_i^A \geq (\beta_j - \beta_i) v_i^B .
      \]
    Similarly, since agent $j$ does not envy agent $i$, we have
      \[
        (\alpha_j - \alpha_i) v_j^A \geq (\beta_i - \beta_j) v_j^B .
      \]
    Since $v_i^B$, $v_j^B$, and $\alpha_i - \alpha_j$ are strictly negative, we have
      \[
        \frac{v_i^A}{v_i^B} \geq \frac{\beta_j - \beta_i}{\alpha_i - \alpha_j}
          \qquad \text{and} \qquad
        \frac{v_j^A}{v_j^B} \leq \frac{\beta_i - \beta_j}{\alpha_j - \alpha_i}
          = \frac{\beta_j - \beta_i}{\alpha_i - \alpha_j} .
      \]
    By assumption,
      \[
        \frac{\beta_j - \beta_i}{\alpha_i - \alpha_j}
          \leq \frac{v_i^A}{v_i^B}
          < \frac{v_j^A}{v_j^B}
          \leq \frac{\beta_j - \beta_i}{\alpha_i - \alpha_j} ,
      \]
    which is absurd.
\end{proof}

\begin{corollary}
    \label{cor: EF sorted order}
    Consider a two chore types instance where $v_i^A \leq 0$ and $v_i^B < 0$ for all $i \in N$.
    Let $X$ be an envy-free allocation where each agent $i$ is allocated the bundle $X_i = (\alpha_i, \beta_i)$. We can reorder the agents so that $\frac{v_j^A}{v_j^B} \leq \frac{v_{j+1}^A}{v_{j+1}^B}$ and $\alpha_j \geq \alpha_{j + 1}$ for all $1 \leq j < n$.
\end{corollary}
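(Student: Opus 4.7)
The plan is to produce the desired ordering by a two-key sort and then verify both inequalities. Specifically, I would sort the agents so that the primary key is the ratio $\frac{v_i^A}{v_i^B}$ in ascending order, and the tiebreaker (for agents with equal ratios) is $\alpha_i$ in descending order. After this reindexing, the first inequality $\frac{v_j^A}{v_j^B} \le \frac{v_{j+1}^A}{v_{j+1}^B}$ holds by construction.

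For the second inequality $\alpha_j \ge \alpha_{j+1}$, I would split into two cases. When $\frac{v_j^A}{v_j^B} < \frac{v_{j+1}^A}{v_{j+1}^B}$, the preceding lemma applies directly (with $i = j$ and $j' = j+1$) and yields $\alpha_j \ge \alpha_{j+1}$. When $\frac{v_j^A}{v_j^B} = \frac{v_{j+1}^A}{v_{j+1}^B}$, the lemma is silent, but the tiebreaker in our sort guarantees $\alpha_j \ge \alpha_{j+1}$ directly.

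The only subtle point is verifying that the tiebreaking step is legitimate, i.e., that reordering agents within a ratio-equivalence class by $\alpha$ does not disturb the first inequality and is consistent with the lemma for all pairs (not just consecutive ones). This is immediate: within an equivalence class of equal ratios, any permutation preserves the ascending order of ratios, and the lemma places no constraint on such pairs. Across classes the lemma gives the $\alpha$-comparison for every pair with strictly smaller ratio, so consecutiveness is not needed. I do not foresee a real obstacle here; the corollary is essentially a bookkeeping consequence of the lemma together with a tiebreaking convention, and the entire proof should fit in a few lines.
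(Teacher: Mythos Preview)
Your proposal is correct and is exactly the natural argument the paper leaves implicit: the corollary is stated without proof as an immediate consequence of the preceding lemma, and your two-key sort (ratio ascending, then $\alpha$ descending as tiebreaker) is precisely what is needed to handle the case of equal ratios that the lemma does not cover. There is nothing to add.
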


For the remainder of this section, we assume the agents are reordered as in \Cref{cor: EF sorted order}. The following result provides an easy method for checking envy-freeness.

\begin{lemma}
    \label{lem: local EF implies EF}
        Consider a two chore types instance where $v_i^A \leq 0$ and $v_i^B < 0$ for all $i \in N$, and $\frac{v_i^A}{v_i^B} \leq \frac{v_{i+1}^A}{v_{i+1}^B}$ for all $1 \leq i < n$.
        Let $X$ be an allocation where each agent $i$ receives the bundle $X_i = (\alpha_i, \beta_i)$, and $\alpha_i \geq \alpha_{i + 1}$ for all $1 \leq i < n$. If agents $i$ and $i + 1$ do not envy each other for all $1 \leq i < n$, then the allocation $X$ is envy-free.
\end{lemma}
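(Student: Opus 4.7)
The plan is to prove the stronger statement that for \emph{every} pair $1 \leq i < j \leq n$ (not just consecutive pairs), agents $i$ and $j$ do not envy each other, by induction on the gap $j - i$. The base case $j - i = 1$ is exactly the hypothesis of the lemma.

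The key inductive tool is a \textbf{chaining sub-claim}: for any indices $p < q < r$,
\begin{enumerate}
    \item if $p$ does not envy $q$ and $q$ does not envy $r$, then $p$ does not envy $r$;
    \item if $r$ does not envy $q$ and $q$ does not envy $p$, then $r$ does not envy $p$.
\end{enumerate}
To establish (1), it suffices to derive $v_p(X_q) \geq v_p(X_r)$ and combine with $v_p(X_p) \geq v_p(X_q)$. Setting $d_\alpha := \alpha_q - \alpha_r$ (which is $\geq 0$ by the monotonicity hypothesis) and $d_\beta := \beta_r - \beta_q$, the assumption $v_q(X_q) \geq v_q(X_r)$ rearranges to $d_\alpha v_q^A \geq d_\beta v_q^B$; dividing by $v_q^B < 0$ (and flipping) yields $d_\alpha \tfrac{v_q^A}{v_q^B} \leq d_\beta$. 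Since $d_\alpha \geq 0$ and $\tfrac{v_p^A}{v_p^B} \leq \tfrac{v_q^A}{v_q^B}$ by the sorted order of the agents, we obtain $d_\alpha \tfrac{v_p^A}{v_p^B} \leq d_\beta$. Multiplying back through by $v_p^B < 0$ recovers $d_\alpha v_p^A \geq d_\beta v_p^B$, i.e., $v_p(X_q) \geq v_p(X_r)$. Part (2) is proved by an entirely symmetric computation, using $e_\alpha := \alpha_p - \alpha_q \geq 0$ and $e_\beta := \beta_q - \beta_p$, and invoking $\tfrac{v_q^A}{v_q^B} \leq \tfrac{v_r^A}{v_r^B}$ in the final step.

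With the sub-claim in hand, the inductive step is immediate: for $j - i = d + 1 \geq 2$, apply (1) with $(p,q,r) = (i, i+1, j)$, using the lemma's hypothesis for the consecutive pair $(i, i+1)$ together with the inductive hypothesis for the pair $(i+1, j)$ of gap $d$, to conclude $i$ does not envy $j$; similarly, part (2) with $(p,q,r) = (i, i+1, j)$ yields that $j$ does not envy $i$. Iterating over all pairs completes the proof of envy-freeness.

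The main obstacle is purely bookkeeping: the manipulated inequalities involve strictly negative $v^B$ values and non-positive $v^A$ values, so each division by $v^B$ flips the inequality, and care is required to keep the sign of $d_\alpha$ (respectively $e_\alpha$) non-negative so that the ratio comparison goes in the desired direction. Both the monotonicity $\alpha_i \geq \alpha_{i+1}$ and the sortedness of $\tfrac{v_i^A}{v_i^B}$ are used crucially and in tandem, which is why exactly one of the two directions of the chain uses $\alpha_q - \alpha_r$ while the other uses $\alpha_p - \alpha_q$.
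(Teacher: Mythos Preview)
Your proof is correct and follows essentially the same transitivity-through-an-intermediate-agent argument as the paper. Your execution is in fact slightly cleaner: by dividing only by the strictly negative $v^B$ values (never by $v^A$), you avoid the paper's separate case analysis for agents with $v_i^A = 0$.
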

\begin{proof}
    Let $i, j, k$ be three agents where $i < j < k$. It is sufficient to prove that non-envy between these agents is transitive: that is, whenever agents $i$ and $j$ do not envy each other, and agents $j$ and $k$ do not envy each other, then $i$ and $k$ do not envy each other. Let us show this is the case.

    By assumption, $\alpha_i \geq \alpha_j \geq \alpha_k$. Since agents $i$ and $j$ do not envy each other, this implies $\beta_i \leq \beta_j$. Similarly, since agents $j$ and $k$ do not envy each other, $\beta_j \leq \beta_k$.

    Let us first consider the case where at least one of $i$, $j$ and $k$ is indifferent towards type $A$ chores. By the ordering assumption, this forces $v_i^A = 0$. Then, $v_i(X_i) = v_i^B \beta_i \geq v_i^B \beta_k = v_i(X_k)$. Hence agent $i$ never envies agent $k$ in this case. It remains to show agent $k$ does not envy agent $i$: to do this, let us consider three subcases.
      \begin{enumerate}
        \item If $v_i^A = v_j^A = v_k^A = 0$, then $\beta_i = \beta_j = \beta_k$. Then $v_k(X_k) = v_k^B \beta_k = v_k^B \beta_i = v_k^B(X_i)$, so agent $k$ does not envy agent $i$.
        \item Suppose $v_i^A = v_j^A = 0$, but $v_k^A < 0$. This implies $\beta_i = \beta_j$. Then $v_k(X_k) \geq v_k(X_j) = v_k((\alpha_j, \beta_j)) \geq v_k((\alpha_i, \beta_i)) = v_k(X_i)$. Hence agent $k$ does not envy agent $i$.
        \item The last subcase, where $v_i^A = 0$ but $v_j^A$ and $v_k^A$ are strictly negative, is deferred: this is handled by the method detailed below.
      \end{enumerate}
    Therefore, when at least one of agents $i, j, k$ is indifferent towards type $A$ chores, agents $i$ and $k$ do not envy each other. To complete the proof, let us now consider the remaining case where $v_i^A, v_j^A, v_k^A$ are strictly negative.

    Since agent $j$ does not envy agent $k$, we have
      \[
        \alpha_j v_j^A + \beta_j v_j^B \geq \alpha_k v_j^A + \beta_k v_j^B .
      \]
    Rearranging,
      \begin{equation} \label{eq:ef-ineq}
        (\alpha_j - \alpha_k) v_j^A \geq (\beta_k - \beta_j) v_j^B .
      \end{equation}
    Since $\beta_j \leq \beta_k$, both sides of the inequality are non-positive. Also, since $\frac{v_i^A}{v_i^B} \leq \frac{v_j^A}{v_j^B}$, we have $\frac{v_i^A}{v_j^A} \leq \frac{v_j^A}{v_j^B}$. Multiplying the left side of inequality (\ref{eq:ef-ineq}) by $\frac{v_i^A}{v_j^A}$ and the right side by $\frac{v_i^B}{v_j^B}$ yields
      \[
        (\alpha_j - \alpha_k) v_i^A \geq (\beta_k - \beta_j) v_i^B ,
      \]
    so $v_i(X_j) \geq v_i(X_k)$. By assumption, agent $i$ does not envy agent $j$, i.e., $v_i(X_i) \geq v_i(X_j)$; hence agent $i$ does not envy agent $k$ either.

    We now use a similar approach to show that agent $k$ does not envy agent $i$; this method also deals with case 3 from earlier, since it applies even if $v_i^A = 0$. Because agent $j$ does not envy agent $i$, we have
      \[
        (\alpha_j - \alpha_i) v_j^A \geq (\beta_i - \beta_j) v_j^B . 
      \]
    In this case, both sides of the inequality are non-negative. Since $\frac{v_k^A}{v_j^A} \geq \frac{v_k^B}{v_j^B}$, it follows that
      \[
        (\alpha_j - \alpha_i) v_k^A \geq (\beta_i - \beta_j) v_k^B ,  
      \]
    so $v_k(X_j) \geq v_k(X_i)$. By assumption, $v_k(X_k) \geq v_k(X_j)$, so agent $k$ does not envy agent $i$ either.
\end{proof}

\begin{algorithm}
\caption{Dynamic programming function to determine whether a partial allocation can be extended into an envy-free allocation.}
\label{alg:dpforef}
\SetKwFunction{Ff}{f}
\SetKwProg{Fn}{Function}{:}{}
\SetKwComment{Comment}{$\triangleright$\ }{}

\Fn(){\Ff{$a$, $b$, $i$, $\alpha$, $\beta$}} { 
\If(\Comment*[f]{Base case: there are no agents remaining}){$i = n$} {
	\If{$a+b = 0$} {
		\Return YES
	}
	\Else { 
		\Return NO
	}
}
\Comment*[f]{Try every possible bundle $(\alpha', \beta')$ for agent $i+1$ such that $\alpha' \leq \alpha$}

\For{$\alpha' \gets 0$ \KwTo $\min(a, \alpha)$}{
	\For{$\beta' \gets 0$ \KwTo $b$}{
		\If{If $(\alpha', \beta')$ can be assigned to agent $i+1$ such that agent $i$ and $i+1$ are envy-free} {
			\If{$f(a-\alpha', b-\beta', i+1, \alpha', \beta')$ is YES} {
			\Return YES
			}
		}
	}
}
\Return NO
}
\end{algorithm}

We use \Cref{lem: local EF implies EF} to create a dynamic programming algorithm, \Cref{alg:dpforef}, to help us determine whether an envy-free allocation exists.
Let $f(a, b, i, \alpha, \beta)$ be the result of a subproblem that represents a state where we have assigned bundles to the first $i$ agents.
In particular, the state $(a, b, i, \alpha, \beta)$ represents the following:
\begin{itemize}
	\item Items have been assigned to the first $i$ agents such that they are envy-free,
	\item Agent $i$ received the bundle $(\alpha, \beta)$, and
	\item There are $a$ type $A$ items and $b$ type $B$ items to allocate to the remaining $n-i$ agents.
\end{itemize}

The result of $f(a, b, i, \alpha, \beta)$ is YES if the remaining items can be allocated in an envy-free way to agents $i+1$ through $n$, and NO otherwise.

To compute $f(a, b, i, \alpha, \beta)$, every valid bundle for agent $i+1$ is considered.
In particular, we consider every bundle $(\alpha', \beta')$ satisfying $\alpha' \leq a$, $\beta' \leq b$ and $\alpha' \leq \alpha$. 
If there exists such a bundle $(\alpha', \beta')$ that can be extended into an envy-free allocation, then the result is YES. Otherwise, the result is NO.
Envy-freeness is checked using \Cref{lem: local EF implies EF}.
Correctness of \Cref{alg:dpforef} holds because it considers every assignment satisfying the structure of \Cref{cor: EF sorted order}.

Since there are polynomial many states and each state takes polynomial-time to compute (with respect to the number of agents and items), the dynamic programming algorithm runs in polynomial time.

We use \Cref{alg:dpforef} to find an envy-free allocation whenever one exists.
In particular, for each agent starting from agent 1, we try every possible bundle until one is found that can be extended into an envy-free allocation. 
If this procedure succeeds, then we have found an envy-free allocation in polynomial time.
If this procedure fails, then by \Cref{cor: EF sorted order} and \Cref{lem: local EF implies EF} we know that there does not exist any envy-free allocation.

\begin{remark}
    \label{remark: finding EF in polynomial time for goods}
    We note that the same approach can be used to prove the equivalent result of \Cref{thm: finding EF in polynomial time} for two \emph{good} types.
\end{remark}

\section{Discussion}

The existence of EF1 and PO allocations or EFX allocations for the case of chores are major open problems in fair division.  
In this paper, we identified a natural setting or valuation restriction under which not only can we guarantee the existence of allocations that satisfy EF1 and PO, and EFX respectively, but such allocations can be computed in polynomial time. 
A related question is the complexity of checking whether there exists an envy-free allocation. Whereas this problem is NP-complete for chores in general,  we showed that there exists a dynamic program  for two chore types instances that can solve the problem in polynomial time.
There are several relevant problems that remain open. The existence and complexity of EF1 and PO allocations or EFX allocations is open for personalized bi-valued utilities. It is also open whether there always exists a PO and EFX allocation for our setting. 

\section*{Acknowledgment}
Aziz is supported by the Defence Science and Technology Group through the Centre for Advanced Defence Research in Robotics and Autonomous Systems under the project “Task Allocation for Multi-Vehicle Coordination” (UA227119). Mashbat Suzuki is supported by the ARC Laureate Project FL200100204 on "Trustworthy AI".

\bibliographystyle{ACM-Reference-Format}  % do not change this line!

\bibliography{abb.bib,haris_master_angus.bib, aziz_personal_angus.bib}

\clearpage

\appendix

\section{EFX: Failure of Existing Approaches} 
\label{section: Failure of Existing EFX Approaches}

In this section, we explore important algorithms for chore allocation as well natural adaptations for fair allocation of goods to the case of chores. 
Our finding is that the algorithms do not give the EFX guarantee even for the case of two item types, which suggests that a different approach is required to find an EFX allocation.

\subsection{Goods algorithm of \citet{GMV22a}}

In  the paper~\citep{GMV22a}, an EFX algorithm is presented for goods with two item types. 
In their algorithm, they begin by allocating each agent their most preferred item in a round-robin way.
This process stops once there are not enough items remaining to continue this. 
They then describe how to allocate the remaining items. 

We show that there exists a case with chores where this approach cannot produce an EFX allocation.
The case has 4 agents, numbered from 1 to 4, and 6 items (3 of type $A$ and 3 of type $B$).
The valuations are as follows, where $\epsilon$ is a sufficiently small positive constant. Note that $N_A = \{ 1 \}$ and $N_B = \{ 2, 3, 4 \}$.

\begin{table}[h]
    \centering
    \begin{tabular}{ccc}
    	\toprule
	\textbf{Agents} & \textbf{Valuation of type $A$ items} & \textbf{Valuation of type $B$ items} \\
	\midrule
	1 & $-\frac{1}{6}+\epsilon$ & $-\frac{1}{6}-\epsilon$ \\
 	2, 3, 4 & $-\frac{1}{6}-\epsilon$ & $-\frac{1}{6}+\epsilon$ \\
	\bottomrule
    \end{tabular}
    \caption{\label{table:goods-algo-table} Two chore type instance where the goods algorithm of \citet{GMV22a} fails to find an EFX allocation.}
\end{table}

If we allocate each agent their most preferred item in a round robin way, this creates a partial allocation where $X_1 = (1, 0)$ and $X_2 = X_3 = X_4 = (0, 1)$, with 2 unallocated type $A$ items. 
We cannot allocate both of these to one agent, as this would not be EFX.
Hence, at least one agent in $N_B$ must receive one of the unallocated chores, and at least one agent from $N_B$ must not receive one of the unallocated chores.
However, this is not EFX as an agent in $N_B$ with $X_i = (1, 1)$ would EFX-envy an agent with $X_j = (0, 1)$.

\subsection{PROPX Algorithms of \citet{LLW21a}}
\citet{LLW21a} provide two algorithms which produce a PROPX allocation.
They begin by transforming any instance into an instance with identical ordering.
An instance has identical ordering (IDO) if all agents agree on the ordering of the items.
In particular, let $c_1, c_2, ..., c_m$ be the chores.
Then, $v_i(c_1) \leq v_i(c_2) \leq ... \leq v_i(c_m)$ for all agents $i$.
They then use one of two algorithms, ``The Top-trading Envy Cycle Elimination Algorithm'' and ``The Bid-and-Take Algorithm'', to create a PROPX allocation for the IDO instance. 
They then provide a mechanism to transform this into a PROPX allocation for the original non-IDO instance. 
We show that there exists a case with two item types where both algorithms create an allocation that is not EFX.

The case has 3 agents, numbered from 1 to 3, and 6 items (3 of type $A$ and 3 of type $B$).

The valuations are in \Cref{table:propx-table-1}, where $\epsilon$ is a sufficiently small positive constant. Note that $N_A = \{ 1 \}$ and $N_B = \{ 2, 3 \}$.

\begin{table}[h]
    \centering
    \begin{tabular}{ccc}
    	\toprule
	\textbf{Agent} & \textbf{Valuation of type $A$ items} & \textbf{Valuation of type $B$ items} \\
	\midrule
	1 & $-3\epsilon$ & $-\frac{1}{3}+3\epsilon$ \\
	2 & $-\frac{1}{3}+2\epsilon$ & $-2\epsilon$ \\
	3 & $-\frac{1}{3}+\epsilon$ & $-\epsilon$ \\
	\bottomrule
    \end{tabular}
    \caption{\label{table:propx-table-1} Two chore type instance where PROPX algorithms fail to find an EFX allocation.}
\end{table}

This instance is transformed into an instance with identical ordering, as shown in \Cref{table:propx-table-2}.

\begin{table}[h]
    \centering
    \begin{tabular}{ccc}
    	\toprule
	\textbf{Agent} & \textbf{Valuation of type $A$ items} & \textbf{Valuation of type $B$ items} \\
	\midrule
	1 & $-\frac{1}{3}+3\epsilon$ & $-3\epsilon$ \\
 	2 & $-\frac{1}{3}+2\epsilon$ & $-2\epsilon$ \\
	3 & $-\frac{1}{3}+\epsilon$ & $-\epsilon$ \\

	\bottomrule
    \end{tabular}
    \caption{\label{table:propx-table-2} The instance from \Cref{table:propx-table-1}, transformed into an instance with identical ordering.}
\end{table}

\paragraph{The Top-trading Envy Cycle Elimination Algorithm:} In this algorithm, items are allocated from the least valuation to the greatest valuation (according to the IDO instance) 
%\haris{need to remind the reader what IDO is. }
to an agent who does not envy any other agent. 
This leads to an allocation where each agent receives one type $A$ and one type $B$ item.

This allocation is then transformed into a PROPX allocation for the non-IDO case. 
This leads to one of the following two allocations, depending on the tiebreaking used:
\begin{itemize}
    \item $X_1 = (2, 0) $, $X_2 = (1, 1)$ and $X_3 = (0, 2)$.
    \item $X_1 = (2, 0) $, $X_2 = (0, 2)$ and $X_3 = (1, 1)$.
\end{itemize}
Neither allocation is EFX.
In the first case, this is due to the envy that agent 2 has for agent 3, and in the second case this is due to the envy that agent 3 has for agent 2.

\paragraph{The Bid-and-Take Algorithm:} In this algorithm, items are allocated from the least valuation to the greatest valuation (according to the IDO instance) to an agent which has the greatest valuation for this item, as long as this satisfies PROPX.
This leads to the following allocation:
\begin{itemize}
    \item $X_1 = (2, 0) $, $X_2 = (1, 1)$ and $X_3 = (0, 2)$.
\end{itemize}

This allocation is then transformed into a PROPX allocation for the non-IDO case, which leaves the allocation unchanged.
This is not EFX due to the envy that agent 2 has for agent 3.

\end{document}